\documentclass[journal]{IEEEtran}

\usepackage{algorithmic}
\usepackage{algorithm}
\usepackage{array}

\usepackage{subfigure}
\usepackage{amsmath,amsfonts}
\usepackage{amsthm}

\usepackage{textcomp}
\usepackage{stfloats}
\usepackage{url}
\usepackage{verbatim}
\usepackage{graphicx}
\usepackage{cite}
\usepackage{bm}
\usepackage{soul}
\usepackage{arydshln}
\usepackage{booktabs}
\usepackage{pifont}
\usepackage{amssymb}
\usepackage{tabularx}
\usepackage{microtype}
\usepackage{stfloats}
\usepackage{epstopdf}
\usepackage{makecell}
\usepackage{diagbox}

\allowdisplaybreaks[4]
\usepackage[implicit=false]{hyperref}
\usepackage{tikz,xcolor}

\newtheorem{lemma}{Lemma}

\newtheorem{corollary}{Corollary}
\newtheorem{proposition}{Proposition}

\begin{document}

\title{Networked Tracking of Multiple Moving Targets in 6G Network}
\author{Yanmo Hu,~\IEEEmembership{Member,~IEEE},
Weifeng Zhu,~\IEEEmembership{Member,~IEEE},
Chenshu Wu,~\IEEEmembership{Senior~Member,~IEEE},
Shuowen Zhang,~\IEEEmembership{Senior~Member,~IEEE},
J. Andrew Zhang,~\IEEEmembership{Senior~Member,~IEEE},
and
Liang Liu,~\IEEEmembership{Fellow,~IEEE}\vspace{-0.25em}
\thanks{
Y. Hu, W. Zhu, S. Zhang, and L. Liu
are with the
Department of Electrical and Electronic Engineering, The Hong Kong Polytechnic
University, Hong Kong, SAR, China
(e-mail: \{yanmo.hu, eee-wf.zhu, shuowen.zhang, liang-eie.liu\}@polyu.edu.hk).

C. Wu is with the Department of Computer
Science, The University of Hong Kong, Hong Kong, SAR, China
(e-mail: chenshu@cs.hku.hk).

J. A. Zhang is with Global Big Data Technologies Centre, the University of Technology Sydney,
Sydney,
Australia
(e-mail: Andrew.Zhang@uts.edu.au).

}}
\maketitle

\begin{abstract}

This paper considers a networked tracking architecture in
6G integrated sensing and communication (ISAC) systems,
where multiple base stations (BSs)
cooperatively transmit radio signals
and process received echo signals
to track multiple moving targets.
Compared to the single-BS counterpart,
networked tracking allows the moving targets to be associated with different BSs over time such that the wireless resources can be dynamically allocated among BSs based on target locations.
However,
networked tracking imposes new challenges for algorithm design and resource allocation.
In this paper,
we first design the networked Kalman Filter (NKF) that is suitable for multi-BS based tracking,
then characterize the posterior Cram\'{e}r-Rao bound (PCRB)
under this NKF, and last design the beamforming vectors of all the BSs to minimize the tracking PCRB.
Numerical results show that our dynamic beamforming design can properly associate the targets to the suitable BSs at various sensing blocks and reduce the tracking mean-squared error (MSE).

\end{abstract}

\begin{IEEEkeywords}
Integrated sensing and communication (ISAC),
networked sensing,
multi-target tracking,
Kalman filter,
beamforming optimization.
\end{IEEEkeywords}

\section{Introduction}\label{Section_Introduction}

\subsection{Motivations and Backgrounds}

\IEEEPARstart{I}{ntegrated}
sensing and communications (ISAC)
\cite{9737357, 9540344, 10663814}
have been recognized as one of the six key
scenarios for 6G development \cite{9705498}
and
have garnered significant attention over the past decade.
Currently, research on single-node sensing technologies,
encompassing both monostatic \cite{tongkai_2026_ICASSP, 234560} and bistatic \cite{10637442} sensing,
has achieved significant advancements,
providing a solid foundation for the development
of ISAC systems.
However, the limitations of single-node sensing manifest in several key aspects: a restricted sensing range, incomplete parameter estimation (e.g., the inability to estimate tangential velocity), and a lack of adaptive flexibility, etc.
Driven by ongoing research efforts,
networked sensing \cite{10462908, wang2024fue, 10557715},
which enables sensing across multiple cooperated nodes from diverse observation perspectives,
has increasingly attracted attention.
By leveraging the synergy of multiple nodes,
networked sensing offers a more comprehensive and robust framework for addressing complex sensing tasks in various environments.

Networked sensing can leverage the ubiquitous deployment of base stations (BSs) and/or user equipments (UEs),
making it particularly well-suited for ISAC applications.
Current research primarily focuses on three key scenarios:
UE-BS-based cooperative sensing \cite{10636720, 10462908},
UE-based networked sensing \cite{wang2024fue},
and
BS-based networked sensing \cite{10557715, 10226276, 10684491, 11288092, gao2025covariancebasedimagingmultiviewfusion}.
The UE-BS-based cooperative sensing combines the strengths of both uplink (UL) and downlink (DL) sensing.
Study in \cite{10462908}
thoroughly investigates
the sensing accuracy of the cooperative approach compared to non-cooperative sensing.
Results demonstrate that cooperative sensing significantly enhances sensing accuracy, while also achieving higher communication reliability.
Furthermore,
in \cite{10636720},
a generalized
multi-BS and multi-UE
configuration is proposed for 4D environmental reconstruction,
successfully implementing multi-level data fusion
through a deep learning approach.
The
UE-based
networked sensing
relies exclusively on UL signals transmitted from multiple UEs units
to the receiver of a specific BS,
significantly reducing system complexity.
However, this sensing configuration suffers from
challenges such as
the clock asynchronism and requires efficient
resource allocation.
A representative study in \cite{wang2024fue}
effectively addresses these issues by
identifying optimal and worst-case resource allocation strategies for each sensing block.
The
BS-based networked sensing utilizes DL signals across multiple BSs,
allowing for the integration of a large volume of DL signals with known modulated symbols.
A critical challenge of this scenario is the fusion of DL signals.
In \cite{10557715},
a data-level fusion approach is proposed
to jointly estimate target parameters,
supported by impressive experimental results,
while \cite{10226276}
introduces a symbol-level fusion approach, outperforming
both data-level multi-BS sensing and single-BS sensing in terms of the accuracy of location and velocity estimation.
However,
current advanced networked sensing schemes localize targets of interest within individual sensing blocks.
Due to the temporal correlation between adjacent blocks,
sensing results are inherently interdependent.
Treating these results as independent may lead to performance degradation,
thereby limiting the overall sensing accuracy.
An effective approach to leverage the coupled data
across multiple blocks
to achieve better fusion performance
is \emph{target tracking}.

Target tracking,
which facilitates parameter estimation across multiple sensing blocks,
serves as an effective method
capable of achieving high-precision sensing tasks
by continuously updating and refining the target's state based on incoming measurements.
At its core,
tracking process relies on two fundamental models:
the target motion model and the measurement model \cite{book_tracking1}.
The target motion model,
also referred to as the dynamic model,
plays a critical role in predicting the future state.
This model is heavily influenced by system architectures,
sensing environments,
and target characteristics.
Consequently,
typical target motion models, including
the constant velocity (CV),
constant acceleration (CA),
and coordinated turn (CT) models \cite{251886, 4102704},
are developed to tailor to the specific practical scenario.
The measurement model
for target tracking
describes how sensor observations relate to
the targets' state.
When the relationship between the measurement and the targets' state is linear,
the conventional Kalman filter is well-suited for the tracking.
However,
in cases where this relationship is nonlinear,
often encountered in complex scenarios,
the tracking process requires advanced filtering techniques,
including the extended Kalman filter (EKF) \cite{book_tracking1}, unscented Kalman filter (UKF) \cite{1271397}, and quadrature Kalman filter (QKF) \cite{855552}, etc.
Since the target tracking
is able
to mitigate measurement errors,
it is widely employed in various sensing schemes.

Because of the high-precision advantages,
target tracking
has garnered significant research interest.
Existing studies can be broadly categorized into
single-node tracking  \cite{9679386, 5592379},
single-target networked tracking \cite{9896671, 10214383}
and multi-target networked tracking \cite{6095653, 9119156, 6960029}.
As the most fundamental category,
single-node tracking is investigated
by
\cite{9679386},
which introduces a tracking-based algorithm for WiFi-based uplink sensing.
By employing a Kalman filter on position and Doppler measurements,
it achieves high-precision localization
and showcases the potential of
uplink tracking in ISAC systems.
Networked tracking schemes, in contrast, offer greater system gain.
For instance,
\cite{9896671} designs
a UAV-enabled networked sensing framework to reduce velocity estimation errors for a single target,
demonstrating superior performance over single-BS methods.
Beam optimization is another key focus in this category.
In \cite{10214383},
an EKF-based algorithm is proposed for angle prediction,
which presents a fine beam-tracking method
to enhance sensing accuracy.
Building on these foundations,
multi-target networked tracking has emerged as a more practical and increasingly active area,
where beam optimization remains a central research challenge.
A commonly adopted approach of these studies is to utilize the posterior Cram\'{e}r-Rao bound (PCRB) \cite{668800}
as the optimization criterion for designing optimal system parameters.
For instance, \cite{6095653} presents an optimization method for a cognitive radar network.
This method designs the antenna scheduling and power allocation for each radar to track multiple targets in multipath scenarios.
Similarly,
the studies in \cite{9119156, 6960029}
leverage MIMO techniques to optimize transmit signals for multi-target tracking,
enabling the generation of multiple beams
for simultaneous tracking.
However,
these studies rely on
multiple beams
with
each exclusively directing at a single target without interference from others.
Due to the limitation of the beam resources in communication systems,
it is infeasible for the BS to direct its beams
toward all targets simultaneously across all sensing blocks.

\begin{figure}[!t]
\centering
{\includegraphics[width=2.9in]{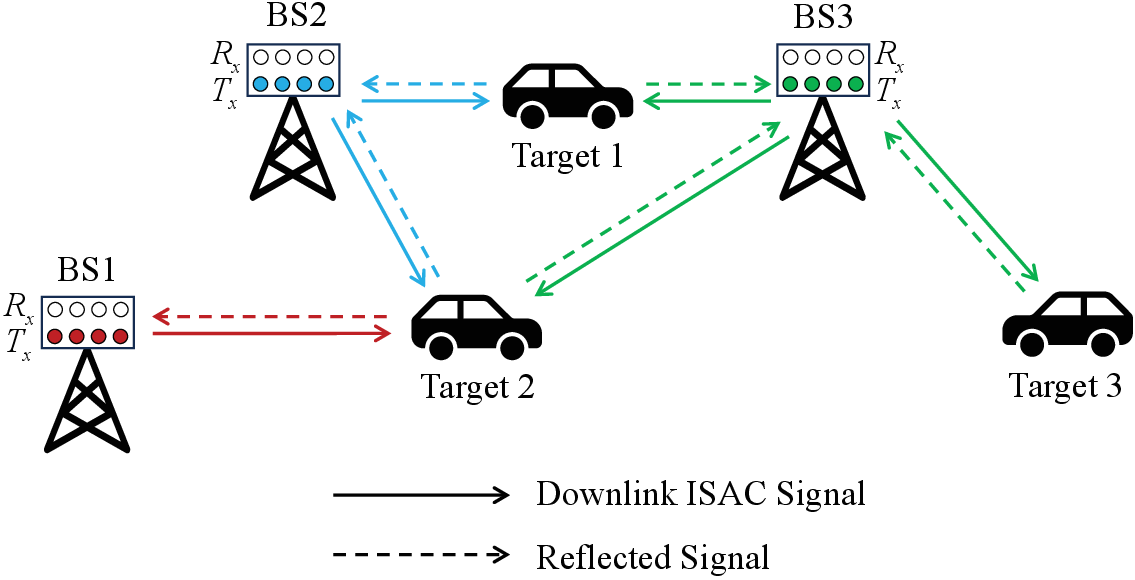}}
\caption{System model of our considered networked tracking architecture,
where
the BSs transmit downlink signals to track multiple moving targets.
In particular,
different BSs have different target association strategies.}
\label{Fig_picture}
\end{figure}

\subsection{Contributions and Organizations}

In this paper,
to efficiently achieve tracking in multi-target networked systems,
we realize
an optimal beam allocation strategy
which can
achieve the target association dynamically at each BS.
Specifically,
as depicted in Fig. \ref{Fig_picture},
different BSs have different target association strategies.
For instance,
the long distance between BS 1 and Targets 1 and 3
results in BS 1 serving only Target 2.
This association strategy
prevents the inefficient allocation of beams to distant links.
To achieve such a function,
we develop a networked tracking
scheme along with
beamforming optimization
in the considered ISAC networked systems.
We focus on two main problems.
For the networked sensing scheme with given beamformers,
a networked tracking model is constructed to
characterize the temporal and geometrical correlations of the estimated parameters.
To leverage these multidimensional features, a Networked Kalman Filter (NKF) is proposed to achieve optimal data fusion.
Then,
beamforming optimization is formulated
to dynamically optimize the beamformers at each sensing block to achieve
beam association and
optimal tracking accuracy.
Our main contributions are highlighted as follows.
\begin{itemize}

\item
We propose a multi-target tracking framework specifically designed for multi-BS networked sensing systems.
The framework is realized through a novel NKF that achieves high-precision sensing by exploiting both temporal and geometrical correlations in the estimates.
The devised NKF algorithm effectively performs multidimensional data fusion,
enabling the simultaneous high-precision tracking of each target's location, velocity, and equivalent radar cross section (RCS).

\item

We provide the PCRB for the proposed multi-target networked tracking model
for performance analysis.
Specifically,
the PCRB at each tracking step is calculated recursively
from the bounds of the preceding time step.
Furthermore,
we provide a detailed analysis of the PCRB for the networked scheme,
which yields critical insights into how key parameters influence the ultimate tracking performance.

\item
We propose
a beamforming optimization algorithm
to
dynamically optimize the
beamformer of each BS in every sensing block
using a PCRB minimization criterion.
However,
this optimization faces a fundamental challenge:
due to target motion,
beamforming for the next sensing block
should be optimized in advance,
since the future estimated parameters are not yet available.
To address this problem,
we
leverage the predicted target parameters for beamforming optimization
using the semidefinite program (SDP).
Simulation results confirm
that the proposed algorithm enables dynamic
target association at each BS,
thereby enhancing networked tracking flexibility in multi-target scenarios.

\end{itemize}

The rest of this paper is organized as follows.
Section \ref{Section_Signal_Model}
introduces the signal model for networked sensing.
Section \ref{Section_Tracking_Model}
provides the
tracking model
and the nonlinear filtering algorithm.
The
performance bounds of our scheme is given in
Section \ref{Section_performance_PCRB_analysis},
followed by a detailed analysis of the sensing metrics.
Section \ref{Section_tracking_optimizatio}
presents the beamforming optimization algorithm.
Simulation results are shown in Section \ref{Section_simulation_results}
and conclusions are drawn in Section \ref{Section_conclusion}.

$Notation$:
$\odot $, $\otimes $ and $\oplus$ denote the Hadamard, Kronecker and Khatri-Rao products, respectively;
${{\mathbf{1}}_{K_1 \times K_2}}\in {{\mathbb{Z}}^{K_1 \times K_2}}$ denotes the matrix with all elements being 1;
${{\mathbf{I}}_{M}}\in {{\mathbb{Z}}^{M\times M}}$ denotes identity matrix;
${{\mathbf{A}}^{T}}$, ${{\mathbf{A}}^{*}}$, and ${{\mathbf{A}}^{H}}$ represent the transpose, conjugate, and conjugate transpose of $\mathbf{A}$, respectively;
$\text{blkdiag}\left\{\cdot\right\}$
denotes the block diagonalization operator;
$\mathbf{E}_{i, j}$ denotes the matrix
where the $\left(i, j\right)$-th element is 1,
while all other elements are 0;
$\mathbf{e}_{M, m} \in \mathbf{Z}^{M \times 1}$,
$m = 0, 1, \cdots, M - 1$,
denotes the vector where the $m$-th element is 1, while all other elements are 0.

\section{System Model for Networked Tracking}\label{Section_Signal_Model}
\subsection{Networked Tracking Scheme}\label{Section_sensing_scheme_introduction}

We consider a MIMO-OFDM-based ISAC
cellular system,
as depicted in Fig. \ref{Fig_picture},
where $K$ base stations (BSs)
transmit downlink beams to communicate with $R$ users and track
$Q$ moving targets using the echo signals simultaneously.
The 2D coordinate of BS $k$ is denoted as
$\mathbf{p}_k = \left[p_{x,k}, p_{y,k}\right]^T$
in meters, $\forall k$.
Furthermore,
it is assumed that each
BS is equipped with $N_T$ transmit antennas and $N_R$
receive antennas and operates in full-duplex mode.
In practice, the transmit and receive antenna array at each BS should be sufficiently separated such that the self-interference
can be largely suppressed together with RF and digital interference cancellation techniques
\cite{9724258}.

Moreover, let $\mathcal{M} = \left\{0,1, \cdots,M - 1\right\}$ denote the set of OFDM sub-carriers, where $M$ is the number of sub-carriers.
To mitigate the interference from signals transmitted by other BSs
during the tracking process,
we follow the orthogonal sub-carrier allocation scheme \cite{10436573}.
Specifically, let
$\mathcal{M}_k \subseteq \mathcal{M}$
and
$M_k$
denote the set and the number of sub-carriers allocated to BS $k$,
respectively, $\forall k$.
Then, it is assumed that
${\mathcal{M}_{k}}\cap {\mathcal{M}_{\bar{k}}}=\varnothing$ for any ${k} \ne {\bar{k}}$,
and
$\bigcup _{k=0}^{K-1}{\mathcal{M}_{k}}=\mathcal{M}$.
Then, each BS will only transmit radio signals and receive echo signals at its assigned sub-carriers to track the moving targets.

\subsection{Signal Model for Networked Tracking}\label{Section_dlkjflskdjflksjdf}

The OFDM symbols used for tracking the moving targets
are divided into multiple sensing blocks,
each comprising $I$ OFDM symbols.
Within a given sensing block,
target location and velocity are assumed to be constant across different symbols,
which is a common assumption known as the ``stop-and-go'' approximation \cite{5420035}.
Under this approximation, at sensing block $n$,
the 2D coordinate of target $q$ is denoted as
$\left(x_{x,n,q}, x_{y,n,q}\right)$ in meters,
and the velocity of target $q$ along the $x$-axis and $y$-axis
is denoted as $v_{x,n,q}$ and $v_{y,n,q}$ in meters/second, respectively, $\forall q$.
Our goal is to estimate
$x_{x,n,q}$,
$x_{y,n,q}$,
$v_{x,n,q}$,
and
$v_{y,n,q}$, $\forall n$, for the tracking purpose.

Specifically, during sensing block $n$,
the frequency-domain downlink OFDM signal emitted by
the transmit antenna array of the $k$-th BS
at the
$m$-th sub-carrier
of
the
$i$-th OFDM symbol duration
is given by
\begin{align}\label{formula_signal_transmitting}
{{\mathbf{s}}_{n,k,i, m}}& ={\sum\limits_{\ell=0}^{L_{n,k,m}-1}{{{\mathbf{b}}_{n,k,m,\ell }}{{c}_{n,k,\ell ,i,m}}}}={{{{\mathbf{{B}}}}_{n,k,m}}{{\mathbf{c}}_{n,k,i,m}}},
\end{align}
where
$L_{n,k,m}$ denotes the number of beams
transmitted by the $k$-th BS of sensing block $n$ at the $m$-th subcarrier,
${{\mathbf{b}}_{n,{{k}}, m,\ell}}$
is the $\ell$-th transmit beamformer of the $k$-th BS of sensing block $n$ at the $m$-th subcarrier,
${{{\mathbf{{B}}}}_{n,k,m}}=\left[ {{\mathbf{b}}_{n,k,m,0}},\cdots ,{{\mathbf{b}}_{n,k,m,{L_{n,k,m}}-1}} \right]\in {{\mathbb{C}}^{{{N}_{T}}\times {L_{n,k,m}}}}$
is the matrix
containing all the beamformers of the $k$-th BS during sensing block $n$ at the $m$-th subcarrier,
and
${{\mathbf{c}}_{n,k,i,m}}={{\left[ {{c}_{n,k,0,i,m}},\cdots ,{{c}_{n,k,{L_{n,k,m}}-1,i,m}} \right]}^{T}}\in {{\mathbb{C}}^{{L_{n,k,m}}\times 1}}$
is the modulated symbol vector
with
${{c}_{n,k,\ell,i,m}}$
denoting the modulated unit-power symbol
at the $\ell$-th transmit beam
of the $k$-th BS at the $m$-th sub-carrier during the $i$-th OFDM
symbol duration of the $n$-th sensing block.
Note that
${\mathbf{{c}}_{n,k,i,m}} = 0$
if $m \notin \mathcal{M}_k$
and
${\mathbf{{c}}_{n,k,i,m}}\sim \mathcal{C}\mathcal{N}\left( \mathbf{0}_{L_{n,k,m} \times 1}, \mathbf{I}_{L_{n,k,m}} \right)$
if $m \in \mathcal{M}_k$.
Moreover,
the covariance matrix of the signals
emitted by the transmit antenna array at BS $k$ of sensing block $n$ at the $m$-th subcarrier
can be expressed as
$\mathbf{R}_{B, n,k,m}=\mathbb{E}\left\{ {{{\mathbf{{B}}}}_{n,k,m}}{{\mathbf{c}}_{n,k,i,m}}\mathbf{c}_{n,k,i,m}^{H}\mathbf{{B}}_{n,k,m}^{H} \right\}={{{\mathbf{{B}}}}_{n,k,m}}\mathbf{{B}}_{n,k,m}^{H}$,
which follows the rank constraint of $\text{rank}\left\{ \mathbf{R}_{B,n,k,m} \right\} \le L_{n, k, m}$.

Under the considered model, all the BSs transmit the above wireless signals
simultaneously in the downlink and receive the echo signals from the targets.
Let
${{\theta }_{n,k,q}}$,
${{\varphi }_{n,k,q}}$,
${{\tau }_{n,k,q}}$,
and
${{f}_{n,k,q}}$
denote the AOA, AOD, delay, and Doppler frequency between the $q$-th target and the $k$-th BS
at sensing block $n$, respectively,
$\forall n, k, q$.
These parameters are characterized as
\begin{align}\label{parameter_transform}
\nonumber {{\theta }_{n,{{k}},q}}=& \, \arctan \frac{{{x}_{x,n,q}}-{{p}_{x,{{k}}}}}{{{x}_{y,n,q}}-{{p}_{y,{{k}}}}}+\Delta {{\theta }_{{{k}}}}, \\
\nonumber  {{\varphi }_{n,{{k}},q}} = & \,{{\theta }_{n,{{k}},q}} - \Delta {{\theta }_{{{k}}}} + \Delta {{\varphi }_{{{k}}}}, \\
\nonumber {{\tau }_{n,{{k}},q}}=& \, \frac{2}{c}{{D}_{n,{{k}_{t}},q}}, \\
{{f}_{n,{{k}},q}}= & \, \frac{2\left[ {{v}_{x,n,q}},{{v}_{y,n,q}} \right]}{\lambda {{{D}_{n,{{k}},q}}}}\left({{{\mathbf{p}}_{{{k}}}}-{{\left[ {{x}_{x,n,q}},{{x}_{y,n,q}} \right]}^{T}}}\right),
\end{align}
where
$\lambda$ denotes the wavelength,
$c$ represents the speed of the light,
$\Delta {{\theta }_{k}}$
and
$\Delta {{\varphi }_{k}}$
are the
inclination angle of the receive and the transmit arrays,
respectively,
and
${{D}_{n,k,q}}\triangleq {{\left\| {{\mathbf{p}}_{k}}-{{\left[ {{x}_{x,n,q}},{{x}_{y,n,q}} \right]}^{T}} \right\|}_{2}}$
denotes the range from the $k$-th BS to the $q$-th target at sensing block $n$.
Employing the above target parameters, during the $i$-th OFDM symbol of sensing block $n$,
the echo signal obtained
from the receive antenna array of the $k$-th BS
at sub-carrier ${m} \in \mathcal{M}_k$
is given by
\begin{align}\label{formula_receiving_MIMO_all}
\nonumber {{\mathbf{y}}_{n,k,i,{{m}}}}= & {\sum\limits_{q=0}^{Q-1}{\underbrace{{{\mathbf{a}}_{r}}\left( {{\theta }_{n,k,q}} \right)}_{\text{AOA term}}{{\xi }_{n,k,q}}\underbrace{\mathbf{a}_{t}^{H}\left( {{\varphi }_{n,{{k}},q}} \right)}_{\text{AOD term}}{{\mathbf{B}}_{n,k,m}}{{\mathbf{c}}_{n,k,i,{{m}}}}}} \\
 & \times \underbrace{{{e}^{j2\pi {{f}_{n,k,q}}i{{T}_{0}}}}}_{\text{Doppler term}}\underbrace{{{e}^{j2\pi {{m}}\Delta f{{\tau }_{n,k,q}}}}}_{\text{Delay term}}+{{\mathbf{z}}_{n,k,i,{{m}}}}, \ {m} \in \mathcal{M}_k,
\end{align}
where
${{\mathbf{a}}_{r}}\left( {{\theta }_{n,k,q}} \right)=\left[ 1,\cdots ,{{e}^{j\frac{2\pi }{\lambda }\left( {{N}_{R}}-1 \right)d\sin {{\theta }_{n,k,q}}}} \right]\in {{\mathbb{C}}^{{{N}_{R}}\times 1}}$
denotes
the receive steering vector for ${{\theta }_{n,k,q}}$,
${{\mathbf{a}}_{t}}\left( {{\varphi }_{n,k,q}} \right)=\left[ 1,\cdots ,{{e}^{j\frac{2\pi }{\lambda }\left( {{N}_{T}}-1 \right)d\sin {{\varphi }_{n,k,q}}}} \right]\in {{\mathbb{C}}^{{{N}_{T}}\times 1}}$
represents the transmit steering vector for ${{\varphi }_{n,k,q}}$,
$T_0$ denotes the OFDM symbol interval,
and
$\mathbf{z}_{n,{{k}},i,m} \in \mathbb{C}^{N_R \times 1}$ is the additive white Gaussian noise with
$\mathbf{z}_{n,{{k}},i,m}\sim \mathcal{C}\mathcal{N}\left( \mathbf{0}_{N_R \times 1},\sigma _{z}^{2} \mathbf{I}_{N_R} \right)$.
Additionally,
$\xi_{n,{{k}},q}$
is the complex coefficient containing the propagation attenuation and
radar cross-section (RCS) of the $q$-th target at the $k$-th BS during sensing block $n$,
and
its precise expression is given by
\begin{align}\label{RCS_fomula}
{{\xi }_{n,{{k}},q}}= \frac{{\sigma }_{R,n,{{k}},q}}{D_{n,{{k}},q}^{2}}\sqrt{ \frac{P_k G_k^2 \lambda^2}{\left(4 \pi\right)^3 }},
\end{align}
where $P_k$ and $G_k$ denote the
transmit power and the antenna gain of the $k$-th BS,
respectively,
both of which are precisely known,
and ${{\sigma }_{R,n,{{k}},q}} \in \mathbb{C}$
is
the equivalent RCS (ERCS).

\begin{figure}[!t]
\centering
{\includegraphics[width=3.4in]{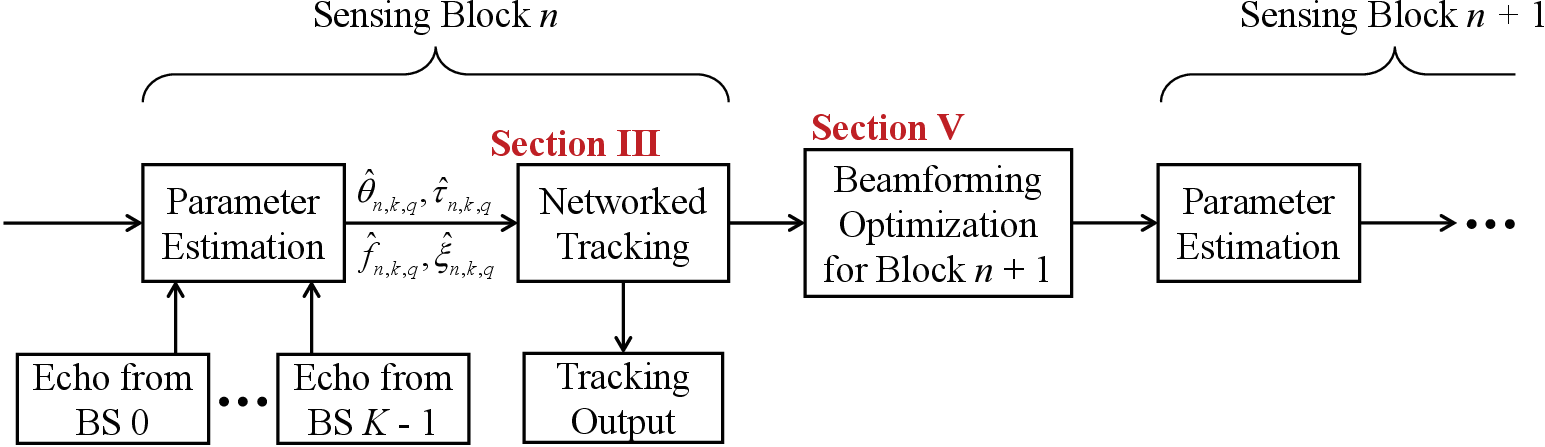}}
\caption{Diagram of the proposed two-phase networked sensing scheme,
where the networked tracking is performed after the parameter estimation,
followed by beamforming optimization for the next sensing block.}
\label{Fig_process}
\end{figure}

\subsection{Parameter Estimation}

Then,
as depicted in Fig. \ref{Fig_process},
using the received signals given in (\ref{formula_receiving_MIMO_all}),
each BS estimates the
sensing parameters at each sensing block.
Since the single-node non-tracking
parameter
estimation in ISAC is quite mature \cite{11079818, 10529184},
we can directly
apply typical sensing algorithms,
such as the FFT-based and MLE-based algorithms,
to (\ref{formula_receiving_MIMO_all}),
and
obtain the estimates
${\hat{\theta }_{n,k,q}}$,
${\hat{\tau }_{n,k ,q}}$,
${\hat{f}_{n ,k,q}}$,
and
${\hat{\xi}_{n,k,q}}$, $\forall n,k,q$.
Note that
directly estimating ${{\varphi }_{n,k,q}}$ from the received signals becomes impossible
because
$\mathbf{a}_{t}^{H}\left( {{\varphi }_{n,{{k}},q}} \right){{\mathbf{{B}}}_{n,{{k}},m}}$ is contaminated by ${{\xi }_{n,{{k}},q}}$, as evident from (\ref{formula_receiving_MIMO_all}).
Referring to (\ref{parameter_transform}),
a feasible approach for obtaining ${{\varphi }_{n,k,q}}$ is to derive it from ${{\theta }_{n,k,q}}$.
Because the estimate ${\hat{\varphi }_{n,k,q}}$
is determined by ${\hat{\theta }_{n,k,q}}$,
it cannot provide extra localization information,
so we will adopt
${\hat{\theta }_{n,k,q}}$,
${\hat{\tau }_{n,k ,q}}$,
${\hat{f}_{n ,k,q}}$,
and
${\hat{\xi}_{n,k,q}}$
for subsequent data fusion.

Based on the above estimated parameters,
in this paper,
we aim to achieve high-precision networked localization
through the following three steps.
First,
given the transmit beamforming vectors at each BS,
we propose a networked Kalman filter (NKF)
algorithm to perform
multidimensional data fusion and obtain high-precision tracking
results.
Next,
we characterize the PCRB of the constructed networked tracking model.
This bound quantifies the tracking performance as a function of the beamforming vectors.
Finally,
we design the beamforming vectors for each BS at each sensing block
to minimize the PCRB for multi-target networked sensing.

\section{Networked Tracking}\label{Section_Tracking_Model}

In this section,
we aim to
achieve the multidimensional data fusion for high-precision localization tasks.
Due to target motion and BS-based cooperative sensing,
the estimated parameters,
i.e.,
${\hat{\theta }_{n,k,q}}$,
${\hat{\tau }_{n,k ,q}}$,
${\hat{f}_{n ,k,q}}$,
and
${\hat{\xi}_{n,k,q}}$,
are inherently interdependent
across multiple sensing blocks, i.e., \emph{temporal correlation},
and
among all BSs, i.e., \emph{geometrical correlation}.
To leverage both correlations,
specifically,
the temporal correlation
will be revealed in
our proposed networked motion model,
and
the geometrical
correlation
will be unveiled in
our proposed networked measurement model.
Both models constitute the networked tracking model.
Subsequently,
based on the networked tracking model,
the NKF algorithm is designed
to achieve the data fusion.

\subsection{Networked Motion Model for Characterizing Temporal Correlation}

Our networked motion model
will mainly characterize the temporal correlation of the estimated parameters
across multiple sensing blocks.
As is typical,
we
assume that each target conforms to a constant velocity model\footnote{It is worth noting that our tracking model can also be extended to more general cases
involving accelerated speed. For conciseness, we exemplify the constant velocity model
in this paper.}.
This modeling assumption enables efficient characterization and utilization of temporal correlations
to
establish explicit relationships between target parameters in adjacent sensing blocks.

Specifically,
suppose that
the state vector of all targets at block $n$ is given by
${{\bm{{\kappa} }}_{A,n}}={{\left[ {\mathbf{x}_{x, n}^{T}},{\mathbf{x}_{y, n}^{T}},{\mathbf{v}_{x, n}^{T}},{\mathbf{v}_{y, n}^{T}} \right]}^{T}}\in {{\mathbb{R}}^{4Q \times 1}}$,
where
${\mathbf{x}_{x, n}}={{\left[ x_{x, n,0},\cdots ,x_{x,n,Q-1} \right]}^{T}} \in \mathbb{R}^{Q \times 1}$ and
${\mathbf{x}_{y, n}}={{\left[ x_{y,n,0},\cdots ,x_{y,n,Q-1} \right]}^{T}} \in \mathbb{R}^{Q\times 1}$ represent the position vectors, and
${\mathbf{v}_{x, n}}={{\left[ v_{x,n,0},\cdots ,v_{x,n,Q-1} \right]}^{T}} \in \mathbb{R}^{Q\times 1}$ and
${\mathbf{v}_{y, n}}={{\left[ v_{y,n,0},\cdots ,v_{y,n,Q-1} \right]}^{T}} \in \mathbb{R}^{Q\times 1}$ denote the velocity vectors, $\forall n$.
Employing the constant velocity model,
we note that
the positions and velocities of all targets at block $n$ can be directly transferred from the state ${{\bm{{\kappa} }}_{A,n-1}}$ at block $n - 1$.
Mathematically,
the updated process for ${{\bm{{\kappa} }}_{A,n}}$ can be expressed as
\begin{align}\label{state_model_part_1}
  & {{\bm{{\kappa }}}_{A,n}}={{\mathbf{{F}}}_{A }}{{\bm{{\kappa }}}_{A,n-1}}+{{\bm{{\omega }}}_{A, n-1}},
\end{align}
where
${{\mathbf{{F}}}_A}=\left[ \begin{matrix}
   {{\mathbf{I}}_{2Q}} & \Delta t{{\mathbf{I}}_{2Q}}  \\
   {{\mathbf{0}}_{2Q}} & {{\mathbf{I}}_{2Q}}  \\
\end{matrix} \right]\in {{\mathbb{R}}^{4Q\times 4Q}}$ indicates
the state transition matrix,
$\Delta t$ is the sensing block interval,
and
${{\bm{{\omega} }}_{A, n-1 }} \in \mathbb{R}^{4Q \times 1}$
is the
process noise of block $n-1$,
representing the uncertainty
associated with the target motion model, $\forall n$.
It is assume that the process noise follows a Gaussian distribution
${{\bm{{\omega} }}_{A, n-1 }} \sim \mathcal{N}\left(\mathbf{0}_{4Q \times 1}, \mathbf{{R}}_A\right)$,
where ${{\mathbf{{R}}_A}}=\mu \left[ \begin{matrix}
   \frac{1}{3}\Delta {{t}^{3}}{{\mathbf{I}}_{2Q}} & \frac{1}{2}\Delta {{t}^{2}}{{\mathbf{I}}_{2Q}}  \\
   \frac{1}{2}\Delta {{t}^{2}}{{\mathbf{I}}_{2Q}} & \Delta t{{\mathbf{I}}_{2Q}}  \\
\end{matrix} \right]$ as described in \cite{book_tracking1},
and
$\mu$ denotes the power of the process noise.

In addition to
target's position and velocity,
as indicated by ${{\bm{{\kappa }}}_{A,n}}$ in (\ref{state_model_part_1}),
the equivalent RCS of each target should also be
included in the target motion model,
as it varies over time.
Next, we will introduce the temporal correlation of the equivalent RCS.
Let
${\bm{{\sigma} }_{all,n, k}}= \left[\operatorname{Re}\left\{ {{\bm{\sigma }}_{R,n, k}^{T}} \right\}, \operatorname{Im}\left\{ {{\bm{\sigma }_{R,n, k}^{T}}} \right\}\right]^{T} \in \mathbb{R}^{2Q\times 1}$
denote the equivalent RCS vector of the $k$-th BS at block $n$,
where
${{\bm{\sigma }}_{R,n, k}}=\left[ {{\sigma }_{R,n,k,0}},{{\sigma }_{R,n,k,1}},\cdots ,{{\sigma }_{R,n,k,Q-1}} \right]^{T}\in {{\mathbb{C}}^{Q\times 1}}$,
$\forall n, k$.
We then define
$\bm{{\sigma} }_{all,n}= \left[{\bm{{\sigma} }^{T}_{all,n, 0}}, \cdots,{\bm{{\sigma} }^{T}_{all,n, K-1}} \right]^{T} \in \mathbb{R}^{2KQ\times 1}$
as the equivalent RCS vector of all targets observed by all BSs at block $n$, $\forall n$.
The state transition for
$\bm{{\sigma} }_{all,n}$ is assumed to be a first-order Markovian process \cite{6095653}, described by the following equation
\begin{align}\label{state_model_part_2}
\bm{{\sigma} }_{all,n}=\bm{{\sigma} }_{all,n-1}+{{\bm{{\omega} }}_{\sigma, n-1 }},
\end{align}
where
${{\bm{{\omega} }}_{\sigma, n - 1 }}\in \mathbb{R}^{2KQ\times 1}$
is the
process
noise at time $n-1$
with the
covariance matrix
${{\mathbf{{R}}}_{\sigma}}$, $\forall n$.
For a typical ISAC networked system,
the fluctuation of RCS,
which is generally a complicated parameter
and can be reflected by ${{\mathbf{{R}}}_{\sigma}}$,
can be characterized statistically using standard Swerling models \cite{RadarHandbook}.
Since the Swerling models are highly dependent on specific sensing scenarios,
it is preferable to determine the RCS's distribution based on the practical applications.

Now,
we combine
(\ref{state_model_part_1})
and
(\ref{state_model_part_2}) into a compact form.
Let
\begin{align}
\nonumber {{\bm{\kappa }}_n}&={{\left[ {{\bm{{\kappa} }}_{A,n}^{T}},\bm{{\sigma} }_{all,n}^{T} \right]}^{T}}\\
\nonumber & = {\left[ {\mathbf{x}_{x, n}^{T}},{\mathbf{x}_{y, n}^{T}},{\mathbf{v}_{x, n}^{T}},{\mathbf{v}_{y, n}^{T}},\bm{{\sigma} }_{all,n}^{T} \right]}\in {{\mathbb{R}}^{\left(4Q + 2KQ\right)\times 1}}
\end{align}
denote the total state vector at block $n$.
The target motion model can finally be given by
\begin{align}\label{state_model_all}
  & {{\bm{\kappa }}_n}={{\mathbf{F}}_{\kappa }}{{\bm{\kappa }_{n-1}}}+\bm{\omega }_{\kappa, n - 1 },
\end{align}
where
${{\mathbf{F}}_{\kappa }}=\text{blkdiag}\left\{ {{\mathbf{{F}}}_{A }},{{\mathbf{I}}_{2{{K}^{2}}Q}} \right\}$
is the total state transition matrix,
and
$\bm{\omega }_{\kappa,n -1  }={{\left[ {{\bm{{\omega} }}_{A, n-1 }^{T}},{{\bm{{\omega} }}_{\sigma, n-1 }^{T}} \right]}^{T}}$
denotes the total process noise vector at sensing block $n-1$
with the covariance of
${{\mathbf{R}}_{\kappa }}=\text{blkdiag}\left\{ {\mathbf{{R}}_A},{{\mathbf{{R}}}_{\sigma}} \right\}$.

\subsection{Networked Measurement Model for Characterizing Geometrical Correlation}\label{Section_networked_measurement_model}

Our networked measurement model
will
establish the connection
between
the target motion state at block $n$, i.e.,
${{\bm{\kappa }}_n}$ as given in (\ref{state_model_all}),
and
the estimated parameters among all BSs at block $n$,
i.e.,
${\hat{\theta }_{n,k,q}}$,
${\hat{\tau }_{n,k ,q}}$,
${\hat{f}_{n ,k,q}}$,
and
${\hat{\xi}_{n,k,q}}$.

Specifically,
we first collect all the target parameters to be estimated into one vector.
As mentioned in Section \ref{Section_dlkjflskdjflksjdf},
we will adopt
${\hat{\theta }_{n,k,q}}$,
${\hat{\tau }_{n,k ,q}}$,
${\hat{f}_{n ,k,q}}$,
and
${\hat{\xi}_{n,k,q}}$
for data fusion.
Let
${{\bm{\theta }}_{n, k}}=\left[ {{\theta }_{n,k,0}},{{\theta }_{n,k,1}},\cdots , {{\theta }_{n,k,Q-1}}\right]^{T}\in {{\mathbb{R}}^{Q\times 1}}$,
${{\bm{\tau }}_{n, k}}=\left[ {{\tau }_{n,k,0}},{{\tau }_{n,k,1}},\cdots , {{\tau }_{n,k,Q-1}}\right]^{T} \in \mathbb{R}^{Q \times 1}$,
${{\mathbf{f}}_{n, k}}=\left[ f_{n,k,0}, f_{n,k,1}, \cdots ,f_{n,k,Q-1}\right]^{T}\in {{\mathbb{R}}^{Q\times 1}}$,
and
${{\bm{\xi }}_{n, k}}=\left[ {{\xi }_{n,k,0}},{{\xi }_{n,k,1}},\cdots , {{\xi }_{n,k,Q-1}}\right]^{T} \in \mathbb{C}^{Q \times 1}$
denote the vectors
containing AOAs, delays, Doppler frequencies,
and complex coefficients
of all targets
observed by the $k$-th BS at block $n$,
respectively, $\forall n, k$.
These vectors are then combined into a single vector
${{\mathbf{\tilde{u} }}_{n, k}} = \left[{{\bm{\theta }}^{T}_{n, k}}, {{\bm{\tau }}^{T}_{n, k}}, {{\mathbf{f}}^{T}_{n, k}}, \operatorname{Re}\left\{ {{\bm{\xi }}_{n, k}^{T}} \right\}, \operatorname{Im}\left\{ {{\bm{\xi }_{n, k}^{T}}} \right\}  \right]^T \in \mathbb{R}^{5Q \times 1}$
to represent the target parameter vector from the received sensing signals of the $k$-th BS
at block $n$,
and
we now define the total parameter vector at block $n$ by
\begin{align}\label{slfjksdlkfj}
{{\mathbf{u }}_{n}}={{\left[ {{\mathbf{\tilde{u} }}^{T}_{n, 0}}, {{\mathbf{\tilde{u} }}^{T}_{n, 1}}, \cdots, {{\mathbf{\tilde{u} }}^{T}_{n, K-1}} \right]}^{T}}\in {{\mathbb{R}}^{5KQ\times 1}}.
\end{align}

Then,
the measurement model
at sensing block $n$, which is nonlinear, can be expressed
as\footnote{Note that the nonlinear transform operator $h\left(\cdot\right)$
is a mapping function for each target.
Therefore, target association is crucial to ensure the correct ordering of target parameters in the vector $\mathbf{u}_n$.
We assume that target association is successfully performed, as this topic has been extensively studied in numerous works, including those focusing on ISAC \cite{9724258} and radar systems \cite{6225454, 6731596}.}
\begin{align}\label{measurement_model}
{{\mathbf{\hat{u}}_n}}=h\left( {{\bm{\kappa }_n}} \right)+{{\bm{\omega }}_{u,n}},
\end{align}
where
${{\mathbf{\hat{u}}}_n} \in {{\mathbb{R}}^{\left( KQ+4{{K}}Q \right)\times 1}}$
is the
measurement vector at sensing block $n$,
and
${{\bm{\omega }}_{u,n}}\in {{\mathbb{R}}^{5KQ\times 1}}$
denotes the measurement noise at block $n$
obeying
${{\bm{\omega }}_{u,n}} \sim \mathcal{N}\left(\mathbf{0}_{5KQ\times 1}, \mathbf{R}_{u,n }\right)$.
Moreover,
$h\left( \cdot  \right):{{\mathbb{R}}^{\left(4Q + 2KQ\right)\times 1}}\mapsto {{\mathbb{R}}^{5KQ\times 1}}$
indicates a nonlinear transform operator with
$h\left( {{\bm{\kappa }_n}} \right) = {{\mathbf{u }_n}}$,
which
bridges
the state vector $\bm{\kappa}_n$ and
the measurement parameters $\mathbf{u}_n$.
This mapping can be explicitly defined by
(\ref{parameter_transform}) and (\ref{RCS_fomula}).

Note that
the covariance matrix
$\mathbf{R}_{u,n }$ of the noise in (\ref{measurement_model})
is
tightly associated with the value of ${{\mathbf{u }_n}}$.
Hence,
$\mathbf{R}_{u,n }$
is related to $n$
and
is
a time-varying matrix
due to different ${{\mathbf{u }_n}}$ in different sensing blocks.
In this paper,
we consider an ideal estimation case
where
the estimates are unbiased and achieve the lowest estimation covariance.
Under such cases,
the covariance matrix $\mathbf{R}_{u,n }$
can be
defined by the Cram\'{e}r-Rao bound (CRB).
Referring to \cite{FSSPVI},
the Fisher information matrix (FIM),
which is the inverse of the CRB,
is given by
\begin{align}\label{CRB_for_signal_model}
{{\left[ \mathbf{R}_{u,n}^{-1} \right]}_{i,j}}=-\mathbb{E}_{\left. {{\mathbf{y}}_{n}} \right|{{\bm{\psi }}_{n}}}\left\{ {\frac{{{\partial }^{2}}\ln p\left( \left. {{\mathbf{y}}_{n}} \right|{{\bm{\psi }}_{n}} \right)}{\partial {{\left[ {{\mathbf{u }_n}} \right]}_{i}}\partial {{\left[ {{\mathbf{u }_n}} \right]}_{j}}}} \right\}.
\end{align}
where
${{\mathbf{y}}_{n}} = \left[{{\mathbf{y}}_{n,{{0}},0,0}}, {{\mathbf{y}}_{n,{{0}},0,1}},\cdots, {{\mathbf{y}}_{n,{{0}},I-1,M-1}},{{\mathbf{y}}_{n,{{1}},0,0}}, \cdots,\right.$
$\left.{{\mathbf{y}}_{n,{{1}},I-1,M-1}}, \cdots,
{{\mathbf{y}}_{n,{{K-1}},I-1,M-1}} \right]^T$
collects the
echo signals
in
frequency-domain
at sensing block $n$.

It is worth noting that
the operator $h\left( \cdot  \right)$
in (\ref{measurement_model})
maps each target's position and velocity to the corresponding target parameters at each BS.
By doing so, $h\left( \cdot  \right)$
effectively characterizes multi-view observations
of the same target,
thereby
our measurement model
unveils
the geometrical correlation in networked sensing.

\subsection{NKF Algorithm}\label{section_networked_filtering_process}
So far,
we have revealed the
temporal
and
geometrical
correlations
by the networked motion model and networked measurement model,
respectively.
In this subsection,
we will
leverage the correlation models to
develop the NKF
algorithm
and
achieve multidimensional data fusion.

The measurement model given in (\ref{measurement_model})
is nonlinear.
To address this problem,
we
propose an NKF algorithm.
Let
${{\bm{\hat{\kappa} }}_{\left. n \right|n}}$
and
${{\bm{\hat{\kappa} }}_{\left. n \right|n-1}}$
represent the posterior and prediction estimates
at block $n$, respectively.
Their corresponding covariance matrices are defined as
${{\mathbf{P}}_{\left. n \right|n-1}}=\mathbb{E}\left\{ \left( {{\bm{{\kappa} }}_{n}}-{{\bm{\hat{\kappa} }}_{\left. n \right|n-1}} \right){{\left( {{\bm{{\kappa} }}_{n}}-{{\bm{\hat{\kappa} }}_{\left. n \right|n-1}} \right)}^{T}} \right\}$
and
${{\mathbf{P}}_{\left. n \right|n}}=\mathbb{E}\left\{ \left( {{\bm{{\kappa} }}_{n}}-{{\bm{\hat{\kappa} }}_{\left. n \right|n}} \right){{\left( {{\bm{{\kappa} }}_{n}}-{{\bm{\hat{\kappa} }}_{\left. n \right|n}} \right)}^{T}} \right\}$.
Similar to the typical Kalman filter,
our proposed NKF algorithm also
contains the prediction step and the update step,
which are detailed next.

First,
for the prediction step,
the state prediction is implemented.
The state prediction at block $n$
can be calculated from the posterior estimate using (\ref{state_model_all})
\begin{align}\label{3o8inrfolwiejhfoiwehf}
{{\bm{\hat{\kappa }}}_{\left. n \right|n-1}}={{\mathbf{F}}_{\kappa }}{{\bm{\hat{\kappa }}}_{\left. n-1 \right|n-1}},
\end{align}
and its covariance matrix is expressed as
\begin{align}
{{\mathbf{P}}_{\left. n \right|n-1}}={{\mathbf{F}}_{\kappa }}{{\mathbf{P}}_{\left. n-1 \right|n-1}}\mathbf{F}_{\kappa }^{T}+{{\mathbf{R}}_{\kappa }}.
\end{align}

Then,
for the updated step,
the posterior estimate
is refined by the measurement vector $\mathbf{u}_n$
\begin{align}\label{sdlifsl}
  {{{\bm{\hat{\kappa }}}}_{\left. n \right|n}}={{{\bm{\hat{\kappa }}}}_{\left. n \right|n-1}}+{{\mathbf{G}}_{n}}\left( {{\mathbf{u}}_{n}}-h\left({{{\bm{\hat{\kappa }}}}_{\left. n \right|n-1}}\right) \right),
\end{align}
where
${{\mathbf{G}}_{n}}\in {{\mathbb{R}}^{ \left( 4Q+2{{K}}Q \right)\times 5KQ}}$
is the Kalman gain, given by
\begin{align}\label{sdliwwwwwwfsl}
{{\mathbf{G}}_{n}}={{\mathbf{P}}_{\left. n \right|n-1}}\mathbf{H}_{n}^{T}{{\left[ {{\mathbf{H}}_{n}}{{\mathbf{P}}_{\left. n \right|n-1}}\mathbf{H}_{n}^{T}+{{\mathbf{R}}_{u,n}} \right]}^{-1}}.
\end{align}
Here,
${{\mathbf{H}}_{n}} \in {{\mathbb{R}}^{5KQ\times
\left( 4Q+2{{K}}Q \right)}}$
is the Jacobian matrix
at ${{{\bm{\kappa }}_{n}}={{{\bm{\hat{\kappa }}}}_{\left. n \right|n-1}}}$,
expressed as
${{\mathbf{H}}_{n}}={{\left. \frac{\partial {{h}}\left( {{\bm{\kappa }}_{n}} \right)}{\partial {{\bm{\kappa }}_{n}^{T}}} \right|}_{{{\bm{\kappa }}_{n}}={{{\bm{\hat{\kappa }}}}_{\left. n \right|n-1}}}}$.
It is worth noting that
the complex coefficient
$\xi_{n,{{k}},q}$
is related to the target position,
as the variable
$\xi_{n,{{k}},q}$
encounters the propagation attenuation, as shown in (\ref{RCS_fomula}).
Finally,
the updated posterior covariance matrix is given by
\begin{align}\label{3o8inrfolwiejhfoiwehf1515}
{{\mathbf{P}}_{\left. n \right|n}}=\left( \mathbf{I}_{4Q+2{{K}}Q}-{{\mathbf{G}}_{n}}{{\mathbf{H}}_{n}} \right){{\mathbf{P}}_{\left. n \right|n-1}}.
\end{align}

With given beamformers,
equations from (\ref{3o8inrfolwiejhfoiwehf}) to (\ref{3o8inrfolwiejhfoiwehf1515}) form a
recursive method for solving the nonlinear tracking problem
and achieve the data fusion.
Algorithm \ref{Algorithm1} summarizes
the main steps of the
proposed NKF algorithm.

\begin{algorithm}[!t]
\caption{Algorithm of the proposed NKF algorithm with given beamformer}\label{Algorithm1}
\label{Algorithm1}
\hspace*{0.02in}{\bf Input:}
the given beamformer ${{\mathbf{{B}}}_{n,{{k}}}}$ of the $k$-th BS
at sensing block $n$,
the received signal $\mathbf{y}_{n,{{k}},i,m}$
at the $m$-th subcarrier of the $i$-th OFDM symbol of the $k$-th BS
at sensing block $n$,
and the covariance matrix of the process noise ${{\mathbf{R}}_{\kappa }}$.\\
\hspace*{0.02in}{\bf Output:}
the posterior estimate
${{{\bm{\hat{\kappa }}}}_{\left. n \right|n}}$,
and
the
state prediction
${{{\bm{\hat{\kappa }}}}_{\left. n + 1 \right|n}}$.

\begin{algorithmic}[1]

\STATE
Estimating the parameters of each target at each BS
and
obtaining the estimates
${\hat{\theta }_{n,k,q}}$,
${\hat{\tau }_{n,k ,q}}$,
${\hat{f}_{n ,k,q}}$,
and
${\hat{\xi}_{n,k,q}}$.
\STATE
Obtaining the measurement vector ${{\mathbf{\hat{u}}_n}}$ based on (\ref{slfjksdlkfj}).\\
\STATE
Calculating $\mathbf{R}_{u,n,k}$ and ${{\mathbf{H}}_{n,k}}$
using ${{\mathbf{\hat{u}}_n}}$.
\STATE
Performing (\ref{3o8inrfolwiejhfoiwehf})-(\ref{3o8inrfolwiejhfoiwehf1515})
to apply the proposed NKF algorithm
and getting the posterior estimate
${{\bm{\hat{\kappa }}}_{\left. n \right|n}}$
and the state prediction
${{\bm{\hat{\kappa }}}_{\left. n+1 \right|n}}$.

\end{algorithmic}
\end{algorithm}

\section{Performance Bounds and Analysis for the Networked Tracking Scheme}\label{Section_performance_PCRB_analysis}

In this section, we present the performance bounds
for the proposed NKF algorithm,
along
with a detailed analysis.

\subsection{Performance Bounds of the Networked Tracking Scheme}

The PCRB serves as a lower bound of the MSE of
any unbiased estimator for target tracking.
This bound, as typically defined by Bayesian information matrix (BIM) $\mathbf{J}_{B}\left( {{\bm{\kappa }}_{n}} \right)$,
follows the Bayesian Cram\'{e}r-Rao inequality
\begin{align}
\mathbb{E}\left\{ \left( {{{\bm{\hat{\kappa }}}}_{\left. n \right|n}}-{{\bm{\kappa }}_{n}} \right){{\left( {{{\bm{\hat{\kappa }}}}_{\left. n \right|n}}-{{\bm{\kappa }}_{n}} \right)}^{T}} \right\}\succeq \mathbf{J}_{B}^{-1}\left( {{\bm{\kappa }}_{n}} \right),
\end{align}
and the BIM consists of two additive parts
\begin{align}\label{dslifjsf}
\mathbf{J}_{B}\left( {{\bm{\kappa }}_{n}} \right) = \mathbf{J}_{P}\left( {{\bm{\kappa }}_{n}} \right) + \mathbf{J}_{D}\left( {{\bm{\kappa }}_{n}} \right),
\end{align}
where
$\mathbf{J}_{P}\left( {{\bm{\kappa }}_{n}} \right)$
and
$\mathbf{J}_{D}\left( {{\bm{\kappa }}_{n}} \right)$
represent
the
prior information
and
the information obtained from the data,
respectively.
Motivated by \cite{668800}, we propose an elegant method for recursively calculating the BIM
by
leveraging the BIM from the previous time step.
The prior information $\mathbf{J}_{P}\left( {{\bm{\kappa }}_{n}} \right)$
can be computed by
\begin{align}\label{4980h98hgouih}
\nonumber {{\mathbf{J}}_{P}}\left( {{\bm{\kappa }}_{n}} \right) & =-{{\mathbb{E}}_{{{\bm{\kappa }}_{n}}}}\left\{ \nabla _{{{\bm{\kappa }}_{n}}}^{{{\bm{\kappa }}_{n}}}\ln p\left( {{\bm{\kappa }}_{n}} \right) \right\} \\
\nonumber & =\mathbf{R}_{\kappa }^{-1}-\mathbf{R}_{\kappa }^{-1}{{\mathbf{F}}_{\kappa }}{{\left( {{\mathbf{J}}_{B}}\left( {{\bm{\kappa }}_{n-1}} \right)+\mathbf{F}_{\kappa }^{T}\mathbf{R}_{\kappa }^{-1}{{\mathbf{F}}_{\kappa }} \right)}^{-1}}\mathbf{F}_{\kappa }^{T}\mathbf{R}_{\kappa }^{-1} \\
 & ={{\left( {{\mathbf{R}}_{\kappa }}+{{\mathbf{F}}_{\kappa }}\mathbf{J}_{B}^{-1}\left( {{\bm{\kappa }}_{n-1}} \right)\mathbf{F}_{\kappa }^{T} \right)}^{-1}}.
\end{align}
Then, the matrix $\mathbf{J}_{D}\left( {{\bm{\kappa }}_{n}} \right)$
can be expressed as
\begin{align}\label{114980h98hgouih}
\nonumber {{\mathbf{J}}_{D}}\left( {{\bm{\kappa }}_{n}} \right) &=-{{\mathbb{E}}_{{{\bm{\kappa }}_{n}},{{\mathbf{Y}}_{n}}}}\left\{ \nabla _{{{\bm{\kappa }}_{n}}}^{{{\bm{\kappa }}_{n}}}\ln p\left( {{\mathbf{Y}}_{n}}\left| {{\bm{\kappa }}_{n}} \right. \right) \right\} \\
\nonumber  & ={{\mathbb{E}}_{{{\bm{\kappa }}_{n}}}}\left\{ {{\mathbb{E}}_{{{\mathbf{Y}}_{n}}\left| {{\bm{\kappa }}_{n}} \right.}}\left\{ -\nabla _{{{\bm{\kappa }}_{n}}}^{{{\bm{\kappa }}_{n}}}\ln p\left( {{\mathbf{Y}}_{n}}\left| {{\bm{\kappa }}_{n}} \right. \right) \right\} \right\} \\
\nonumber & ={{\mathbb{E}}_{{{\bm{\kappa }}_{n}}}}\left\{ \mathbf{H}_{n}^{T}\mathbf{R}_{u,n}^{-1}{{\mathbf{H}}_{n}} \right\},\\
& =\mathbf{H}_{n}^{T}\mathbf{R}_{u,n}^{-1}{{\mathbf{H}}_{n}},
\end{align}
where
${{\mathbf{H}}_{n}}$ is defined in (\ref{sdliwwwwwwfsl}),
and
$\mathbf{H}_{n}$ and $\mathbf{R}_{u,n}$
are independent of ${{\bm{\kappa }}_{n}}$ and can be determined by the prediction estimate
from the previous time.

Combining both matrices in (\ref{4980h98hgouih}) and (\ref{114980h98hgouih}),
we can obtain the BIM
\begin{align}\label{dfodojfdojfodjf333dfdf}
  \nonumber {{\mathbf{J}}_{B}}\left( {{\bm{\kappa }}_{n}} \right)=& \ {{\left( {{\mathbf{R}}_{\kappa }}+{{\mathbf{F}}_{\kappa }}\mathbf{J}_{B}^{-1}\left( {{\bm{\kappa }}_{n-1}} \right)\mathbf{F}_{\kappa }^{T} \right)}^{-1}}\\
 &+\mathbf{H}_{n}^{T}\mathbf{R}_{u,n}^{-1}{{\mathbf{H}}_{n}},
\end{align}
where the initial matrix of the BIM
can be calculated by
${{\mathbf{J}}_{B}}\left( {{\bm{\kappa }}_{0}} \right)=\mathbf{H}_{n}^{T}\mathbf{R}_{u,n}^{-1}{{\mathbf{H}}_{n}}$.

For networked localization systems
without applying target tracking algorithms \cite{9724258},
the CRB is only a function of ${{\mathbf{J}}_{D}}\left( {{\bm{\kappa }}_{n}} \right)$.
Therefore,
it can be readily concluded that
our networked tracking scheme
outperforms the conventional non-tracking scheme,
because both ${{\mathbf{J}}_{P}}\left( {{\bm{\kappa }}_{n}} \right)$
and
${{\mathbf{J}}_{D}}\left( {{\bm{\kappa }}_{n}} \right)$
are the positive definite matrices
in general sensing scenarios.
The superior precision and reliability of our tracking-based scheme
will be validated
through simulation results in Section \ref{Section_simulation_results}.
Next, we examine further properties using the derived PCRB of our scheme.

\subsection{Analysis for the PCRB of the Networked Tracking Scheme}\label{Section_analysis_PCRB}
In this subsection, some properties of the PCRB in (\ref{dfodojfdojfodjf333dfdf}) are unveiled,
emphasizing the impact of different factors on the networked tracking performance.

We begin with presenting the following proposition to assess the feasibility of networked tracking
with respect to the locations of targets and BSs.
\begin{proposition}\label{analysis_feasibility_networked_sensing}
In a dual-BS networked sensing scenario,
the matrix $\mathbf{J}_{D}\left( {{\bm{\kappa }}_{n}} \right)$ is singular
when targets are collinear with respect to two BSs.
\end{proposition}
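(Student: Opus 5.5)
The plan is to show that the Jacobian $\mathbf{H}_n$ has a nontrivial null space. Then $\mathbf{J}_{D}\left( {{\bm{\kappa }}_{n}} \right)=\mathbf{H}_{n}^{T}\mathbf{R}_{u,n}^{-1}\mathbf{H}_{n}$ in (\ref{114980h98hgouih}) is singular, whatever the (positive definite) $\mathbf{R}_{u,n}$ is. Concretely, I will construct a nonzero direction $\mathbf{e}\in\mathbb{R}^{(4Q+2KQ)\times 1}$ with $\mathbf{H}_n\mathbf{e}=\mathbf{0}$. Then $\mathbf{e}^T\mathbf{J}_D\mathbf{e}=0$, and in fact $\mathbf{J}_D\mathbf{e}=\mathbf{0}$.

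I read ``collinear with respect to two BSs'' as follows: with $K=2$, some target $q$ has position $\left[x_{x,n,q},x_{y,n,q}\right]^T$ lying on the line through $\mathbf{p}_0$ and $\mathbf{p}_1$. The argument below applies to every such target.

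Step 1 is to sort the entries of $\mathbf{u}_n$ by the state variables they depend on, using (\ref{parameter_transform}) and (\ref{RCS_fomula}).
\begin{itemize}
\item The AOA $\theta_{n,k,q}$, the delay $\tau_{n,k,q}$ and the coefficient $\xi_{n,k,q}$ depend only on the position of target $q$ and on the ERCS entries $\sigma_{R,n,k,q}$. They do not depend on the velocity components $v_{x,n,q}$ and $v_{y,n,q}$.
\item The only entries of $\mathbf{u}_n$ that depend on $(v_{x,n,q},v_{y,n,q})$ are the two Doppler frequencies $f_{n,0,q}$ and $f_{n,1,q}$. Each is linear in the velocity:
\begin{align}
\nonumber f_{n,k,q}=\frac{2}{\lambda}\left[v_{x,n,q},v_{y,n,q}\right]\mathbf{r}_{k,q},\quad \mathbf{r}_{k,q}\triangleq\frac{\mathbf{p}_k-\left[x_{x,n,q},x_{y,n,q}\right]^T}{D_{n,k,q}}.
\end{align}
\end{itemize}
Consequently, the columns of $\mathbf{H}_n$ associated with $v_{x,n,q}$ and $v_{y,n,q}$ are zero except in the two rows of $f_{n,0,q}$ and $f_{n,1,q}$. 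In those two rows the entries are $\frac{2}{\lambda}\mathbf{r}_{0,q}^T$ and $\frac{2}{\lambda}\mathbf{r}_{1,q}^T$.

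Step 2 uses collinearity. If target $q$ lies on the line through $\mathbf{p}_0$ and $\mathbf{p}_1$, the unit vectors $\mathbf{r}_{0,q}$ and $\mathbf{r}_{1,q}$ both lie along that line, so $\mathbf{r}_{1,q}=\pm\mathbf{r}_{0,q}$. Let $\mathbf{w}=\left[-r_{y},r_{x}\right]^T\neq\mathbf{0}$ be orthogonal to $\mathbf{r}_{0,q}=\left[r_x,r_y\right]^T$. Define $\mathbf{e}$ to be zero everywhere except $w_1$ in the $v_{x,n,q}$ slot and $w_2$ in the $v_{y,n,q}$ slot.

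By Step 1, $\mathbf{H}_n\mathbf{e}$ vanishes in every row except the two Doppler rows. In those rows it equals $\frac{2}{\lambda}\mathbf{r}_{k,q}^T\mathbf{w}=0$ for $k=0,1$. Hence $\mathbf{H}_n\mathbf{e}=\mathbf{0}$ and $\mathbf{J}_D\mathbf{e}=\mathbf{0}$. Physically, both BSs see only the radial velocity along the common line, so the tangential velocity of the target is unobservable from one block of data.

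The main obstacle is the interpretation and the degenerate cases, not the calculation. If a target coincides with a BS, then $D_{n,k,q}=0$ and $\mathbf{r}_{k,q}$ is undefined. This case must be excluded, and $\mathbf{H}_n$ is evaluated at a point where all $D_{n,k,q}>0$. I will also remark that the argument never needs the collinear target to lie between the BSs, because either sign of $\mathbf{r}_{1,q}=\pm\mathbf{r}_{0,q}$ works. It also never involves the other targets or the beamformers, since they enter only through $\mathbf{R}_{u,n}$ and cannot repair a null direction of $\mathbf{H}_n$.

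Finally, I will note that $\mathbf{J}_B$ itself may still be nonsingular, because the prior term (\ref{4980h98hgouih}) is added. This is the motivation for tracking over single-block networked localization.
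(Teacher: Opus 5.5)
Your proposal is correct and follows the paper's route. The paper also argues that $\mathbf{H}_n$ is rank-deficient (it asserts that $\mathbf{H}_n^T\mathbf{H}_n$ is singular) and concludes that $\mathbf{J}_D=\mathbf{H}_n^T\mathbf{R}_{u,n}^{-1}\mathbf{H}_n$ is singular; you make explicit the step it leaves as ``readily proved'' by exhibiting the tangential-velocity null vector built from $\mathbf{w}\perp\mathbf{r}_{0,q}$.
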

\begin{proof}
It can be readily proved that the matrix
$\mathbf{H}_{n}^{T}{{\mathbf{H}}_{n}}$ becomes singular in this scenario.
Hence,
${{\mathbf{J}}_{D}}\left( {{\bm{\kappa }}_{n}} \right)=\mathbf{H}_{n}^{T}\mathbf{R}_{u,n}^{-1}{{\mathbf{H}}_{n}}$
is also singular.
Proposition \ref{analysis_feasibility_networked_sensing} is thus proved.
\end{proof}

Essentially,
a single BS can only estimate the radial velocity of a target but lacks the capability to determine its tangential velocity.
When the targets' positions satisfy the condition outlined in Proposition \ref{analysis_feasibility_networked_sensing},
the networked velocity sensing is degenerated into the single-BS velocity sensing.
In such cases, the system fails to estimate the tangential velocity of each target.
Hence,
multiple BSs with diverse observation perspectives
are necessary to jointly infer the velocity vector.

Proposition \ref{analysis_feasibility_networked_sensing}
also reveals that
networked tracking cannot work
when the target moves along the line connecting two BSs.
Therefore, an interesting result is provided below.
\begin{corollary}
Two BSs
are generally sufficient, and three non-collinear BSs
can guarantee an effective networked tracking.
\end{corollary}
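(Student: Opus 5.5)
The plan is to reduce the corollary to a rank statement about the velocity block of $\mathbf{J}_{D}(\bm{\kappa}_n)=\mathbf{H}_{n}^{T}\mathbf{R}_{u,n}^{-1}\mathbf{H}_{n}$. Since $\mathbf{R}_{u,n}^{-1}\succ 0$, $\mathbf{J}_{D}$ is nonsingular if and only if $\mathbf{H}_n$ has full column rank. So it suffices to study the column rank of the Jacobian of $h(\cdot)$ in (\ref{measurement_model}).

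First I will show that position and ERCS are identifiable from a single BS. For each target $q$ and BS $k$, the AOA and delay in (\ref{parameter_transform}) give the polar coordinates of $(x_{x,n,q},x_{y,n,q})$ about $\mathbf{p}_k$. Hence the $2\times 2$ Jacobian of $(\theta_{n,k,q},\tau_{n,k,q})$ with respect to position is nonsingular whenever $D_{n,k,q}>0$. Moreover, $\xi_{n,k,q}$ in (\ref{RCS_fomula}) depends on $\bm{\sigma}_{all,n,k}$ through a nonzero scalar. By ordering the columns as (position, ERCS, velocity), $\mathbf{H}_n$ becomes block-triangular. Its full column rank then reduces to full column rank of the velocity block, i.e. the Doppler rows.

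For target $q$, the Doppler rows with respect to $(v_{x,n,q},v_{y,n,q})$ are $\frac{2}{\lambda}\mathbf{d}_{k,q}^T$, $k=0,\dots,K-1$, where $\mathbf{d}_{k,q}$ is the unit line-of-sight vector from the target to BS $k$. These rows are decoupled across targets, so the velocity block is block diagonal over $q$. It therefore has full rank if and only if, for every $q$, the set $\{\mathbf{d}_{k,q}\}_k$ spans $\mathbb{R}^2$. This means target $q$ must not lie on a common line through all BSs.

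With $K=2$, the vectors fail to span exactly when the target lies on the line through $\mathbf{p}_0$ and $\mathbf{p}_1$, which is Proposition \ref{analysis_feasibility_networked_sensing}. That line has Lebesgue measure zero in the plane, so two BSs are \emph{generally} sufficient. With three non-collinear BSs, no single line contains all three, so for any target position at least two of the $\mathbf{d}_{k,q}$ are non-parallel. The velocity block, and hence $\mathbf{J}_{D}$, is then nonsingular for every target configuration. This is the guarantee claimed in the corollary.

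The main obstacle is the block-triangular reduction. The Doppler rows also depend on position through $D_{n,k,q}$ and the direction vectors, and $\xi$ depends on position through $D^2$. I must therefore check that these off-diagonal couplings lie only below the diagonal blocks in my chosen ordering, so they do not affect the rank. I plan to make this precise by writing $\mathbf{H}_n$ as block lower-triangular with the velocity columns last; any row/column permutation preserves rank. The AOA row is nonzero only in the position columns, so it causes no trouble.

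Finally, I would note that $\mathbf{J}_B=\mathbf{J}_P+\mathbf{J}_D$ in (\ref{dfodojfdojfodjf333dfdf}) is always invertible because of $\mathbf{J}_P$. "Effective tracking" is therefore interpreted as the data term alone making every state component observable, i.e. $\mathbf{J}_D\succ 0$.
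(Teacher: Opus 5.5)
Your proof is correct, and it takes a different and more complete route than the paper. The paper gives no separate proof of the corollary. It draws it from Proposition~\ref{analysis_feasibility_networked_sensing}, whose proof only asserts that $\mathbf{H}_n^T\mathbf{H}_n$ becomes singular when a target is collinear with the two BSs, plus the remark that one BS's Doppler only sees radial velocity. That covers only the necessity direction. The positive content of the corollary is asserted but never derived: that the BS--BS line is the \emph{only} failure set for $K=2$, and that three non-collinear BSs never fail.

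Your ordering of columns as (position, ERCS, velocity) and rows as (AOA/delay, $\xi$, Doppler) supplies the missing converse:
the position block has full column rank from one BS, since the AOA/delay Jacobian has $|\det|=2/(cD_{n,k,q})$;
the ERCS block is diagonal with entries proportional to $1/D_{n,k,q}^2$;
the velocity columns are supported only on Doppler rows.
Hence $\mathbf{H}_n$ has full column rank \emph{if and only if} each $\{\mathbf{d}_{k,q}\}_k$ spans $\mathbb{R}^2$. This gives both halves of the corollary and recovers Proposition~\ref{analysis_feasibility_networked_sensing} as the exact failure set for $K=2$. The paper's route buys brevity and physical intuition; yours buys an exact characterization with the assumptions made visible.

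Two of those assumptions should be stated as hypotheses rather than taken for granted.

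\textbf{First}, $\mathbf{R}_{u,n}^{-1}\succ 0$ is not automatic in this model. The diagonal Fisher information on $\tau_{n,k,q}$, $f_{n,k,q}$ and $\xi_{n,k,q}$ is proportional to the beampattern gain $\mathbf{a}_t^H(\varphi_{n,k,q})\mathbf{R}_{B,n,k,m}\mathbf{a}_t(\varphi_{n,k,q})$ (cf.\ (\ref{dslijdlfjdljf22})). So a BS that nulls target $q$ on all its subcarriers drops out for that target. The three-BS guarantee therefore needs every BS to actually illuminate every target. The paper assumes this implicitly by treating $\mathbf{R}_{u,n}$ as an invertible covariance.

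\textbf{Second}, your claim that $\mathbf{J}_B$ is always invertible because of $\mathbf{J}_P$ holds only if $\mathbf{J}_B(\bm{\kappa}_{n-1})$ already is. With $\mathbf{R}_\kappa\succ 0$, the middle form of (\ref{4980h98hgouih}) shows that $\mathbf{J}_P$ has exactly the rank of $\mathbf{J}_B(\bm{\kappa}_{n-1})$, and the paper initializes $\mathbf{J}_B(\bm{\kappa}_0)=\mathbf{J}_D$. A better justification for your reading is this: $\mathbf{J}_D\succ 0$ is what makes (\ref{dfodojfdojfodjf333dfdf}) well defined at the first step. After that, the position--velocity mixing in $\mathbf{F}_\kappa$ makes $\mathbf{J}_B(\bm{\kappa}_1)$ nonsingular even for $K=1$. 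So the corollary only has content at the level of $\mathbf{J}_D$, which is how you interpreted it.
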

Assuming that targets are not collinear with two BSs,
a critical proposition for the transmit beamformer is revealed
as follows.
\begin{proposition}\label{proposition_1_context}
The BIM is a linear function with respect to
$\mathbf{R}_{B, n,k,m}$,
where
the covariance matrix $\mathbf{R}_{B, n,k,m}$
is defined in (\ref{formula_signal_transmitting}).
\end{proposition}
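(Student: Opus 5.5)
The plan is to follow the BIM recursion (\ref{dfodojfdojfodjf333dfdf}) and locate where the beamformers enter. The prior part ${{\mathbf{J}}_{P}}\left( {{\bm{\kappa }}_{n}} \right)$ depends only on $\mathbf{R}_{\kappa}$, $\mathbf{F}_{\kappa}$ and $\mathbf{J}_{B}\left( {{\bm{\kappa }}_{n-1}} \right)$. When the beamformer of block $n$ is designed, $\mathbf{J}_{B}\left( {{\bm{\kappa }}_{n-1}} \right)$ is already fixed, so ${{\mathbf{J}}_{P}}$ is a constant with respect to $\mathbf{R}_{B,n,k,m}$. Likewise, $\mathbf{H}_n$ is the Jacobian of $h(\cdot)$ at the prediction ${{\bm{\hat{\kappa }}}_{\left. n \right|n-1}}$. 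It is defined by (\ref{parameter_transform}) and (\ref{RCS_fomula}) alone, so it does not involve the transmit signal. The claim therefore reduces to showing that $\mathbf{R}_{u,n}^{-1}$ is linear in $\{\mathbf{R}_{B,n,k,m}\}$. Then $\mathbf{H}_n^T\mathbf{R}_{u,n}^{-1}\mathbf{H}_n$ is linear, being a fixed congruence of a linear map, and $\mathbf{J}_B$ is affine (constant plus linear), which is the sense in which "linear" is meant.

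Next I would compute the FIM (\ref{CRB_for_signal_model}) from the model (\ref{formula_receiving_MIMO_all}). Because the sub-carrier sets are orthogonal and the noise is independent across BSs, symbols and sub-carriers, the log-likelihood separates. Hence $\mathbf{R}_{u,n}^{-1}=\text{blkdiag}\{\mathbf{R}_{u,n,0}^{-1},\dots,\mathbf{R}_{u,n,K-1}^{-1}\}$, with each block a sum over $i$ and $m\in\mathcal{M}_k$. Write the noiseless mean as $\bm{\mu}_{n,k,i,m}=\mathbf{A}_{n,k,i,m}\mathbf{B}_{n,k,m}\mathbf{c}_{n,k,i,m}$, where $\mathbf{A}_{n,k,i,m}=\sum_q \xi_{n,k,q}e^{j2\pi(f_{n,k,q}iT_0+m\Delta f\tau_{n,k,q})}\mathbf{a}_r(\theta_{n,k,q})\mathbf{a}_t^H(\varphi_{n,k,q})$. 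For complex Gaussian observations with signal-independent noise covariance $\sigma_z^2\mathbf{I}$, the Slepian--Bangs formula gives
\begin{align*}
[\mathbf{R}_{u,n,k}^{-1}]_{i',j'}=\frac{2}{\sigma_z^2}\sum_{i,m}\operatorname{Re}\left\{\frac{\partial\bm{\mu}^H}{\partial u_{i'}}\frac{\partial\bm{\mu}}{\partial u_{j'}}\right\}.
\end{align*}
Each term has the form $\operatorname{Re}\{\mathbf{c}^H\mathbf{B}^H\dot{\mathbf{A}}_{i'}^H\dot{\mathbf{A}}_{j'}\mathbf{B}\mathbf{c}\}$. Averaging over the unit-power symbols, which the paper does when it defines $\mathbf{R}_{B,n,k,m}$ through an expectation, turns each term into $\operatorname{Re}\{\text{tr}(\dot{\mathbf{A}}_{j'}\mathbf{R}_{B,n,k,m}\dot{\mathbf{A}}_{i'}^H)\}$. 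This is manifestly linear in $\mathbf{R}_{B,n,k,m}$, and the derivatives $\dot{\mathbf{A}}$ depend only on target parameters evaluated at the prediction.

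The main obstacle is the dependence of the FIM on the random symbols $\mathbf{c}_{n,k,i,m}$. For a single realization it is quadratic in $\mathbf{B}_{n,k,m}$, not in $\mathbf{R}_{B,n,k,m}$. I would handle this in one of two ways. The first is to invoke the large-$I$ approximation $\frac{1}{I}\sum_i \mathbf{c}\mathbf{c}^H\approx\mathbf{I}$. The second is to take the expected FIM, since the symbols are independent of the parameters. Either way the trace form above results. A second, minor subtlety is that the Doppler and delay phase factors depend on $i$ and $m$. These enter $\dot{\mathbf{A}}$ only as scalars, so they do not break linearity. Because $\mathbf{R}_{u,n}$ is defined as the CRB, evaluated at the predicted $\mathbf{u}_n$ as the paper does for $\mathbf{H}_n$, no further dependence on the beamformer arises. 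Summing over $k$ and $m$ then shows that $\mathbf{J}_B(\bm{\kappa}_n)=\mathbf{J}_P+\sum_{k,m}\mathcal{L}_{k,m}(\mathbf{R}_{B,n,k,m})$ with linear maps $\mathcal{L}_{k,m}$.
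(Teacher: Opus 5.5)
Your proposal is correct and follows the same route as the paper's proof in Appendix~\ref{proposition1_proof}. In both, $\mathbf{J}_P$ and $\mathbf{H}_n$ are fixed at block $n$, $\mathbf{R}_{u,n}^{-1}$ is block-diagonal across BSs, and each FIM entry, once averaged over the symbols, is linear in $\mathbf{R}_{B,n,k,m}$. Your Slepian--Bangs trace form $\operatorname{Re}\{\text{tr}(\dot{\mathbf{A}}_{j'}\mathbf{R}_{B,n,k,m}\dot{\mathbf{A}}_{i'}^H)\}$ covers all 25 submatrices at once, whereas the paper works out only $\mathbf{J}_{\xi_R\xi_R}$ and $\mathbf{J}_{\theta\tau}$ and asserts the rest; your remark that $\mathbf{J}_B$ is affine (the constant $\mathbf{J}_P$ plus a linear term) correctly sharpens the paper's wording.
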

\begin{proof}
Please refer to Appendix \ref{proposition1_proof}.
\end{proof}

Due to
$\mathbf{R}_{B, n,k,m}={{{\mathbf{{B}}}}_{n,k,m}}\mathbf{{B}}_{n,k,m}^{H}$,
where
${{{\mathbf{{B}}}}_{n,k,m}}$
contains all the beamformers
as illustrated in (\ref{formula_signal_transmitting}),
the beamformers will significantly affect the tracking
accuracy.
This result suggests that
it is possible to optimize the beamformers to further
improve the networked tracking accuracy.
Next,
we will use this essential proposition to optimize the beamformers in a general multi-target networked sensing scenario.

\section{Beamforming Optimization for Networked Tracking}\label{Section_tracking_optimizatio}

Building on the relationship between beamformers and networked tracking accuracy,
in this section,
we propose a PCRB-based optimization framework
to dynamically optimize the beamforming vectors of each BS
and achieve the target association between targets and BSs.

As depicted in Fig. \ref{Fig_process},
after the networked tracking at the sensing block $n$,
we now proceed to optimize the transmit beamforming vectors
for the subsequential sensing block $n + 1$.
Here,
we adopt the PCRB, i.e., $\mathbf{J}_{B}^{-1}\left( {{\bm{{\kappa }}}_{{ n + 1 }}} \right)$, as an optimization criterion
to devise the beamformers,
subject to the constrains of the transmit power.
Note that the diagonal elements of
$\mathbf{J}_{B}^{-1}\left( {{\bm{{\kappa }}}_{{ n + 1 }}} \right)$
denote the lower bounds of various parameters' estimation variance.
The position and velocity are the primary focus over the tracking process.
Hence,
we introduce the weighting vector $\mathbf{w }={{\left[ {{w }_{0}},\cdots ,{{w }_{4Q-1}},0,\cdots ,0 \right]}^{T}}\in \mathbb{R}^{ \left(4Q + 2KQ\right) \times 1}$
to balance the units associated with different target parameters.
Our \emph{Trace-Opt-based optimization} problem is formulated with respect to matrix traces as
\begin{align}
 \nonumber  \left(\text{P1}\right): \  \underset{{{\mathbf{R}}_{B,n+1,{{k}},m}}\in \mathbb{C}^{N_T \times N_T}}{\mathop{\min }}\,\ &\text{tr}\left\{ \text{diag}\left\{ \mathbf{w } \right\}\mathbf{J}_{B}^{-1}\left( {{\bm{{\kappa }}}_{{ n + 1 }}} \right) \right\} \\
 \text{s.t.}\ &   \sum_m \text{tr}\left\{{{\mathbf{R}}_{B,n+1,{{k}},m}}\right\} \le {{P}_{k}}, \ \forall k,
\end{align}
where $P_{k}$ is the transmit power constraint for the $k$-th BS,
and $\mathbf{R}_{B, n+1,k,m}$
is the covariance matrix of the signals
emitted by the transmit antenna array of BS $k$ of sensing block $n+1$
at the $m$-th subcarrier,
as defined in (\ref{formula_signal_transmitting})

However,
Problem (P1) encounters an essential sequential constraint.
As depicted in Fig. \ref{Fig_process},
beamformers at sensing block $n + 1$ should be optimized during the interval between the sensing blocks $n$ and $n + 1$,
where the measurement vector, i.e., $\mathbf{\hat{u}}_{n + 1}$,
and the state vector, i.e., $\bm{\kappa}_{n+1}$,
for block $n+1$ remain unavailable.
Since the
beamforming
optimization for each sensing block inevitably requires
knowledge of the target parameters of the current sensing block,
this sequential dependency creates a significant challenge for networked beamforming optimization.
Fortunately, the tracking algorithm proposed in Section \ref{section_networked_filtering_process} offers an effective solution.
By leveraging the target motion model, the
tracking step in (\ref{3o8inrfolwiejhfoiwehf})
of the
tracking algorithm
provides predicted target parameters for the subsequent sensing block,
denoted as ${{\bm{\hat{\kappa }}}_{\left. n + 1 \right|n}}$.
Treating ${{\bm{\hat{\kappa }}}_{\left. n + 1 \right|n}}$
as the true target state,
our optimization algorithm is designed to predict
the optimal beamformer for the $\left(n+1\right)$-th sensing block during the $n$-th sensing block.
This predictive approach ensures that the beampattern is highly likely to illuminate the potential target positions in the next block step,
thereby improving both localization accuracy and tracking robustness.
Now, Problem (P1) can be rewritten as
\begin{align}
 \nonumber  \left(\text{P2}\right): \  \underset{{{\mathbf{R}}^{\left({\left. n + 1 \right|n}\right)}_{B,{{k}},m}}\in \mathbb{C}^{N_T \times N_T}}{\mathop{\min }}\,\ &\text{tr}\left\{ \text{diag}\left\{ \mathbf{w } \right\}\mathbf{J}_{B}^{-1}\left( {{\bm{\hat{\kappa }}}_{\left. n + 1 \right|n}} \right) \right\} \\
 \text{s.t.} \ &  \sum_m \text{tr}\left\{{{\mathbf{R}}^{\left({\left. n + 1 \right|n}\right)}_{B,{{k}},m}}\right\} \le {{P}_{k}}, \ \forall k,
\end{align}
where
${{\mathbf{R}}^{\left({\left. n + 1 \right|n}\right)}_{B, k,m}}$
denotes the
prediction of
the $\left(n+1\right)$-th sensing block's
transmit covariance matrix of BS $k$
at the $n$-th sensing block.

For Problem (P2),
the constraint is convex, whereas
the objective function is expressed in a non-convex form.
To address this problem,
we employ
the Schur complement technique \cite{Vandenberghe1998DeterminantMW}
to cast Problem (P2) to a SDP problem, as
\begin{align}
\nonumber  \left( \text{P3} \right): \ \underset{\begin{smallmatrix}
 {{\alpha}_{n+1, a}} , \
  {{\mathbf{R}}^{\left({\left. n + 1 \right|n}\right)}_{B,{{k}},m}}
\end{smallmatrix}}{\mathop{\min }}\,\ & \sum\limits_{a=0}^{4Q-1}{w_a{{\alpha}_{n+1, a}}} \\
\nonumber \text{s.t.}\  \ \ \ \  \ \ \ \  \ & \left[ \begin{matrix}
   {{\mathbf{J}}_{B}}\left( {{\bm{\hat{\kappa }}}_{\left. n + 1 \right|n}} \right) & {{\mathbf{e}}_{a}}  \\
   \mathbf{e}_{a}^{T} & {{\alpha}_{n+1, a}}  \\
\end{matrix} \right]\succeq 0 , \ \forall a\\
\nonumber  &  \sum_m \text{tr}\left\{ {{\mathbf{R}}^{\left({\left. n + 1 \right|n}\right)}_{B, {{k}},m}} \right\}\le {{P}_{k}} , \ \forall {{k}} \\
 & {{\mathbf{R}}^{\left({\left. n + 1 \right|n}\right)}_{B, k,m}}\succeq 0 , \ \forall {{k,m}},
\end{align}
where
$\alpha_{n, a}$ is the auxiliary variable,
and
$\mathbf{e}_a \in \mathbb{Z}^{\left(4Q + 2KQ\right)\times 1}$ denotes the $a$-th column of the identity matrix.
Problem (P3) is a convex
optimization problem and can be efficiently resolved using
CVX.

The optimal solution to Problem (P3),
as well as (P2),
possess an important property as described in Proposition \ref{proposition_optimization_span}.
\begin{proposition}\label{proposition_optimization_span}
Let $\mathbf{{R}}^{\left({\left. n + 1 \right|n}\right)}_{\text{op},B, k,m}$ represent the optimal matrix as the solution to Problem (P3).
The matrix $\mathbf{{R}}^{\text{op}}_{B,n+1, k,m}$ is spanned by
\begin{align}
\nonumber  \mathbf{{R}}^{\left({\left. n + 1 \right|n}\right)}_{\text{op},B, k,m}\in \text{span}& \left\{ {{{\mathbf{\dot{a}}}}_{t}}\left( {\hat{\varphi }^{\left({\left. n + 1 \right|n}\right)}_{k,0}} \right),\cdots ,{{{\mathbf{\dot{a}}}}_{t}}\left( {\hat{\varphi }^{\left({\left. n + 1 \right|n}\right)}_{k,Q-1}} \right), \right. \\
& \left. {{\mathbf{a}}_{t}}\left( {\hat{\varphi }^{\left({\left. n + 1 \right|n}\right)}_{k,0}} \right),\cdots ,{{\mathbf{a}}_{t}}\left( {\hat{\varphi }^{\left({\left. n + 1 \right|n}\right)}_{k,Q-1}} \right) \right\} , \ \forall {{k}},
\end{align}
where
$\hat{\varphi}^{\left({\left. n + 1 \right|n}\right)}_{k, q}$
denotes the
prediction of
the $\left(n+1\right)$-th sensing block's
AOD of target $q$ of BS $k$
at the $n$-th sensing block.
Referring to (\ref{parameter_transform}),
the AOD $\hat{\varphi}^{\left({\left. n + 1 \right|n}\right)}_{k, q}$
can be determined by AOA $\hat{\theta}^{\left({\left. n + 1 \right|n}\right)}_{k, q}$,
which can be obtained from ${{\bm{\hat{\kappa }}}_{\left. n + 1 \right|n}}$.

\end{proposition}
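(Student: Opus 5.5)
The plan is the projection argument standard in CRB-based transmit design. First I show that the BIM $\mathbf{J}_{B}(\bm{\hat{\kappa}}_{n+1|n})$ depends on each $\mathbf{R}^{(n+1|n)}_{B,k,m}$ only through a few quadratic forms built from the transmit steering vectors and their derivatives. Then I show that projecting a candidate covariance onto the span of those vectors keeps the BIM unchanged and does not increase the power.

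\emph{Step 1 (structure of the data FIM).} Using Proposition \ref{proposition_1_context} and the derivation in Appendix \ref{proposition1_proof}, I write $\mathbf{R}_{u,n+1}^{-1}$ entrywise from (\ref{CRB_for_signal_model}) for the Gaussian model (\ref{formula_receiving_MIMO_all}), with $\mathbf{c}_{n,k,i,m}\sim\mathcal{CN}(\mathbf{0},\mathbf{I})$ and the sub-carriers orthogonal across BSs. The relevant term is the noise-free mean $\mathbf{a}_r(\theta)\xi\,\mathbf{a}_t^H(\varphi)\mathbf{B}\mathbf{c}$, where $\varphi$ is a deterministic function of $\theta$ by (\ref{parameter_transform}). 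Its derivatives with respect to $\theta_{k,q}$, $\tau_{k,q}$, $f_{k,q}$ and $\xi_{k,q}$ involve only $\mathbf{a}_t(\varphi_{k,q})$ and $\dot{\mathbf{a}}_t(\varphi_{k,q})$ on the transmit side. After averaging over the symbols, every FIM entry is therefore a linear combination (with coefficients independent of the beamformer) of
\[
\mathbf{a}_t^H(\varphi_{k,q})\mathbf{R}_{B,k,m}\mathbf{a}_t(\varphi_{k,p}),\quad
\mathbf{a}_t^H(\varphi_{k,q})\mathbf{R}_{B,k,m}\dot{\mathbf{a}}_t(\varphi_{k,p}),\quad
\dot{\mathbf{a}}_t^H(\varphi_{k,q})\mathbf{R}_{B,k,m}\dot{\mathbf{a}}_t(\varphi_{k,p}),
\]
together with their conjugates. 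Since $\mathbf{J}_{D}=\mathbf{H}^{T}\mathbf{R}_{u}^{-1}\mathbf{H}$ with $\mathbf{H}$ independent of $\mathbf{R}_B$, and $\mathbf{J}_{P}$ in (\ref{4980h98hgouih}) does not depend on $\mathbf{R}_B$ at block $n+1$, the BIM inherits this dependence.

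\emph{Step 2 (projection).} Let $\mathcal{U}_k$ be the span of the $2Q$ vectors in the statement, evaluated at the predicted AODs, and let $\mathbf{P}_k$ be the orthogonal projector onto $\mathcal{U}_k$. For any feasible $\mathbf{R}$, set $\tilde{\mathbf{R}}=\mathbf{P}_k\mathbf{R}\mathbf{P}_k\succeq0$. For $\mathbf{u},\mathbf{v}\in\mathcal{U}_k$ we have $\mathbf{P}_k\mathbf{u}=\mathbf{u}$, so $\mathbf{u}^H\tilde{\mathbf{R}}\mathbf{v}=\mathbf{u}^H\mathbf{R}\mathbf{v}$. Hence all the quadratic forms in Step 1 are preserved, and so are the BIM and the objective of (P3). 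In addition,
\[
\text{tr}\{\tilde{\mathbf{R}}\}=\text{tr}\{\mathbf{P}_k\mathbf{R}\}\le\text{tr}\{\mathbf{R}\},
\]
with equality iff $(\mathbf{I}-\mathbf{P}_k)\mathbf{R}(\mathbf{I}-\mathbf{P}_k)=\mathbf{0}$, i.e., iff the column space of $\mathbf{R}$ lies in $\mathcal{U}_k$.

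\emph{Step 3 (optimality).} Suppose an optimal $\mathbf{R}^{\text{op}}$ had a component outside $\mathcal{U}_k$. Projecting it gives the same objective with strictly positive slack $\delta$ in the power constraint of BS $k$. I would then add this slack back inside the span, for example $\tilde{\mathbf{R}}+\frac{\delta}{\text{tr}\{\mathbf{P}_k\}}\mathbf{P}_k$. By the linearity in Proposition \ref{proposition_1_context}, and because each data-information term is a positive semidefinite quadratic form in $\mathbf{R}$, this changes the BIM by a PSD increment. Consequently $\text{tr}\{\text{diag}\{\mathbf{w}\}\mathbf{J}_B^{-1}\}$ does not increase. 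The conclusion is that an optimal solution always exists with columns in $\mathcal{U}_k$, which is the claimed spanning property. It is exactly the claimed property whenever the added power strictly helps, which holds generically since the targets are assumed non-collinear with the BSs, so the BIM is nonsingular.

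The main obstacle is Step 1. It requires checking carefully, through the chain rule, that the delay, Doppler and gain derivatives produce no transmit-side vector other than $\mathbf{a}_t$ and $\dot{\mathbf{a}}_t$; in particular, that the $\varphi$–$\theta$ coupling contributes only $\dot{\mathbf{a}}_t$ terms. A secondary point is handling multiple sub-carriers, which I will treat per $(k,m)$ with a common $\mathcal{U}_k$.
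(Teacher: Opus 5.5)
Your proposal is correct and follows essentially the same route as the paper. The paper writes $\mathbf{R}_{\text{op}}=\bm{\Delta}\bm{\Delta}^H$ and splits off $\mathbf{R}_1=\mathbf{P}_{u,k}\bm{\Delta}\bm{\Delta}^H\mathbf{P}_{u,k}$, which is exactly your $\mathbf{P}_k\mathbf{R}\mathbf{P}_k$; it notes that the remainder is invisible to every quadratic form in $\mathbf{a}_t,\dot{\mathbf{a}}_t$ while carrying trace $\|\mathbf{P}^{\bot}_{u,k}\bm{\Delta}\|_F^2\ge 0$, and then argues that this leftover power is better spent inside the span. Your Step 3 goes further in two ways: re-injecting the slack as a PSD increment makes explicit what the paper only asserts (``increasing the power will enhance the tracking performance''), and you correctly flag that every optimum lies in the span only if the extra power strictly helps. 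Your per-BS summed power constraint also matches (P3) more faithfully than the paper's per-subcarrier $P_{k,m}$.
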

\begin{proof}
Please refer to Appendix \ref{Appendix_proposition_optimization_span}.
\end{proof}

It can be observed that
both the steering vector,
$\mathbf{a}_t\left(\hat{\varphi}^{\left({\left. n + 1 \right|n}\right)}_{k, q}\right)$,
and its derivative vector,
$\mathbf{\dot{a}}_t\left(\hat{\varphi}^{\left({\left. n + 1 \right|n}\right)}_{k, q}\right)$,
are coexist in the optimal solution.
Intuitively,
the vector
$\mathbf{a}_t\left(\hat{\varphi}^{\left({\left. n + 1 \right|n}\right)}_{k, q}\right)$
in $\mathbf{{R}}^{\left({\left. n + 1 \right|n}\right)}_{\text{op},B, k, m}$
increases the
echo power
because it steers the beam towards the $q$-th target,
while the derivative vector
$\mathbf{\dot{a}}_t\left(\hat{\varphi}^{\left({\left. n + 1 \right|n}\right)}_{k, q}\right)$
characterizes the rate of signal attenuation
as it represents the difference-beam of the transmit array.
Therefore,
the combination of both vectors enhances tracking accuracy.

\section{Numerical Results}\label{Section_simulation_results}

\begin{figure}[!t]
\centering
{\includegraphics[width=3in]{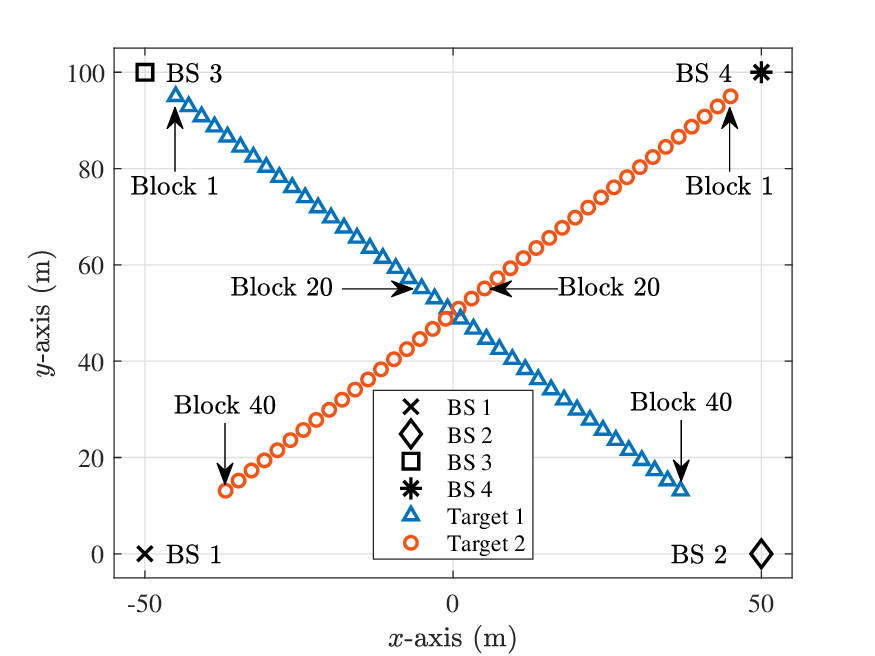}}
\caption{Configuration of the networked sensing scenario with four BSs and two dynamic targets.}
\label{Fig_target_motion_real}
\end{figure}

In this section, simulation results are presented to validate the
proposed networked sensing scheme.
We first describe the system setup,
and
then evaluate the sensing performance.

\begin{table}[]
\caption{Simulation Parameters}\label{table_simulation_parameters_real}
\centering
\begin{tabular}{cc}
\toprule
\toprule
Parameters                                 & Values      \\ \midrule
\makecell[c]{carrier frequency}    & $f_c = 3 \text{GHz}$                  \\
the number of BSs & $K = 4$ \\
position of BSs & $\left(-50\text{m}, 0\text{m}\right)$, $\left(50\text{m}, 0\text{m}\right)$\\
                    & $\left(-50\text{m}, 100\text{m}\right)$, $\left(50\text{m}, 100\text{m}\right)$ \\
the number of targets & $Q = 2$ \\
initial position of targets & $\left(-45\text{m}, 95\text{m}\right)$, $\left(45\text{m}, 95\text{m}\right)$\\

initial velocity of targets & $\left(2.1\text{{m}/{s}}, -2.1\text{{m}/{s}}\right)$,\\
&$\left(-2.1\text{{m}/{s}}, -2.1\text{{m}/{s}}\right)$\\

sensing block interval & $\Delta t = 1\text{s}$\\
the number of transmit antennas&$N_T = 8$\\
in each BS &\\
the number of receive antennas&$N_R = 8$\\
in each BS &\\
antenna interval & $d = {\lambda}/{2}$\\
subcarrier interval  & $\Delta f = 480 \text{kHz}$                  \\
OFDM symbol interval                & $T_0 = 0.1 \text{ms}$                          \\
the number of subcarriers                           & $M = 8$           \\
the number of OFDM symbols                         & $I = 100$                            \\
transmit power of each BS & $40\text{dBmW}$ \\
RCS of each target            & $\sigma_\text{Target} = 1 \text{m}^2$       \\
noise power            & $\sigma_z^2 = 4.92 \times 10^{-12} \text{W}$       \\
\bottomrule
\end{tabular}
\end{table}

The simulation parameters are detailed in Table \ref{table_simulation_parameters_real}.
Based on Table \ref{table_simulation_parameters_real},
Fig. \ref{Fig_target_motion_real}
depicts the BS positions and the target positions
across a total of 40 blocks.
Moreover, the noise power in Table \ref{table_simulation_parameters_real}
is calculated as $\sigma_z^2 = k_B F_n T_{st} B = k_B F_n T_{st} N \Delta f = 1.476 \times 10^{-12} \text{W}$,
where $k_B$ is the
Boltamann constant,
$F_n = 3$ is the receiver noise figure,
and $T_{st} = 290\text{K}$ is the standard temperature.
Additionally,
we allocate ${M}/{K} = 64$
subcarriers to each BS
in an interleaved configuration, following the typical allocation method in 5G NR.
For the $k$-th
BS,
the allocated subcarrier index
is denoted by
$mK + k$,
where $m = 0, 1, \cdots, {M}/{K} - 1$
and $k = 0, 1, \cdots, K - 1$.
Although our proposed scheme can work with any subcarrier configurations,
we adopt the above allocation strategy
due to its capability of maximizing bandwidth for target resolution
and reducing sidelobe for target detection.
Additionally,
the delay, Doppler, AOA,
and AOD for each target at each BS can be estimated using MLE
to ensure a optimal estimation accuracy.

\begin{figure*}[!t]
\centering
{\includegraphics[width=7in]{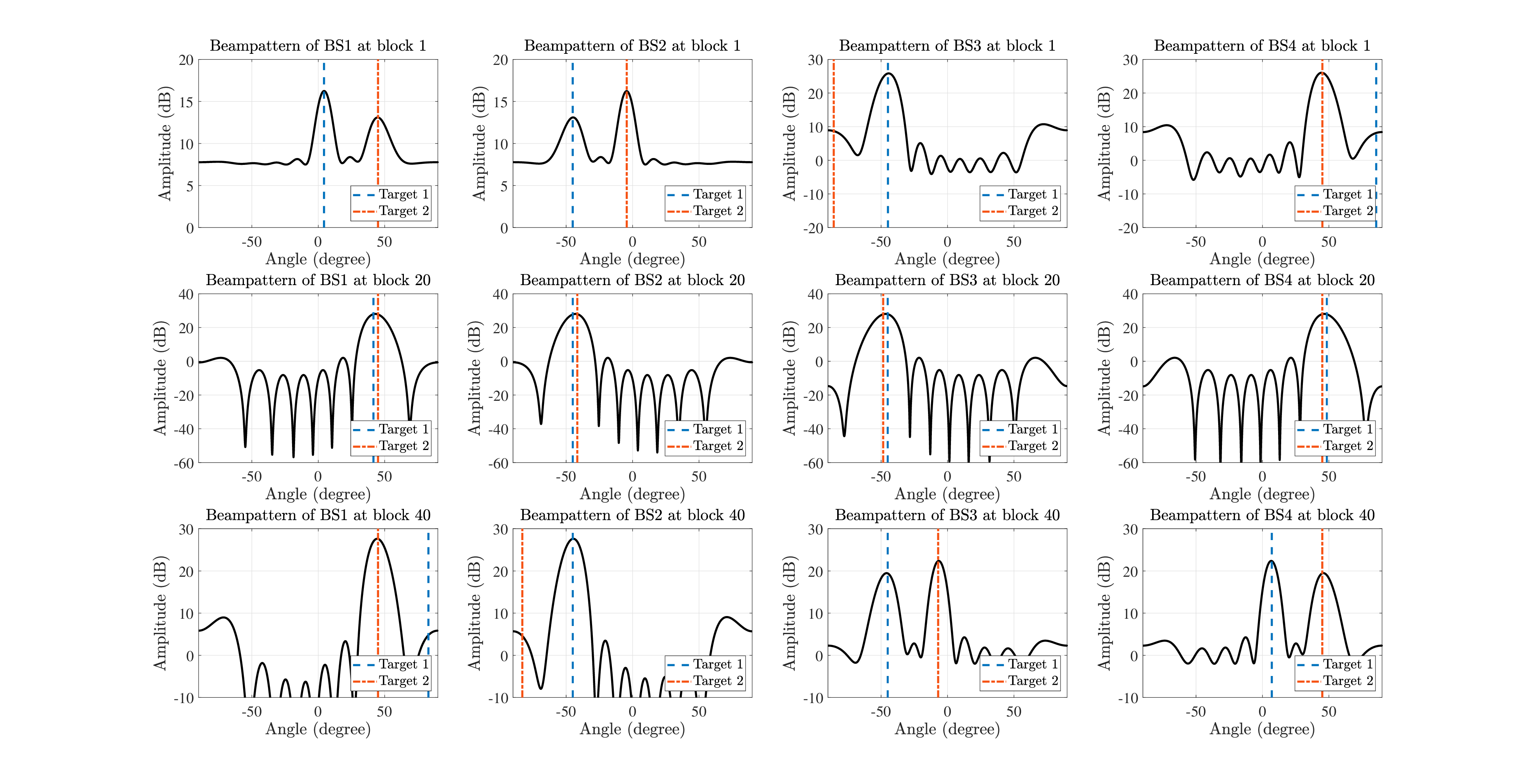}}
\caption{Transmit beampattern results of all BSs at three selected blocks:
block 1, block 20, and block 40.}
\label{Fig_simulation_optimization}
\end{figure*}

Fig. \ref{Fig_simulation_optimization} illustrates
how the transmit beampatterns of all BSs change in response to the targets' motion.
It is worth noting that the
optimal beam direction at each subcarrier is approximately the same,
and thus,
we use the beam at the first subcarrier for illustration.
Three blocks,
i.e., sensing block 1, sensing block 20, and sensing block 40,
are selected from a total of 40 tracking blocks, as marked in Fig. \ref{Fig_target_motion_real}.
The weighting vector for Problem (P3) is set
to $\mathbf{w} = \left[\mathbf{1}^{T}_{2Q \times 1}, \mathbf{0}^{T}_{\left(2Q + 2KQ\right) \times 1}\right]^T$.
At sensing block 1,
BSs 1 and 2 are far away from both targets, BS 3 is close to target 1 and far away from target 2, and BS 4 is close to target 2 and far away from target 1.
Then,
it is observed that
BS 1 and BS 2 generate two lobes,
each directed towards one target,
while BS 3 only transmits one lobe towards target 1,
and BS 4 only transmits one lobe towards target 2.
This indicates that when a BS is close to one target but far away from another target,
it is merely associated with the near target.
When a BS is not close to both targets,
it is associated with both targets.
It is observed that from BS 1 and BS 2 that generates two lobes,
the peak positions of the beampatterns are directed at their nearest target,
which is intuitively expected,
as a closer target experiences lower propagation attenuation.
A similar observation is made at sensing block 40.
At sensing block 20,
both of the two targets are almost of equal distance to the four BSs.
It is observed that
the four BSs adjust their beampatterns to simultaneously track both targets with a single beam,
leading to both targets being illuminated with almost the same power.
Fig. \ref{Fig_simulation_optimization} shows
that the BSs dynamically adjust their transmit signals in response to the locations and velocities of all the targets, as determined by the proposed optimization algorithm.
The observation also reveals that, in all sensing blocks, each target is covered by the beams of at least two BSs.
This is reasonable,
as effective velocity estimation requires the collaborative effort of at least two BSs.
This requirement arises from the fact that using the Doppler can only measure the \emph{radial velocity} between a target and a single BS, whereas utilizing two BSs for a single target enables the estimation of the velocity's direction,
which verifies Proposition \ref{analysis_feasibility_networked_sensing}.

\begin{figure}[!t]
\centering
{\includegraphics[width=3in]{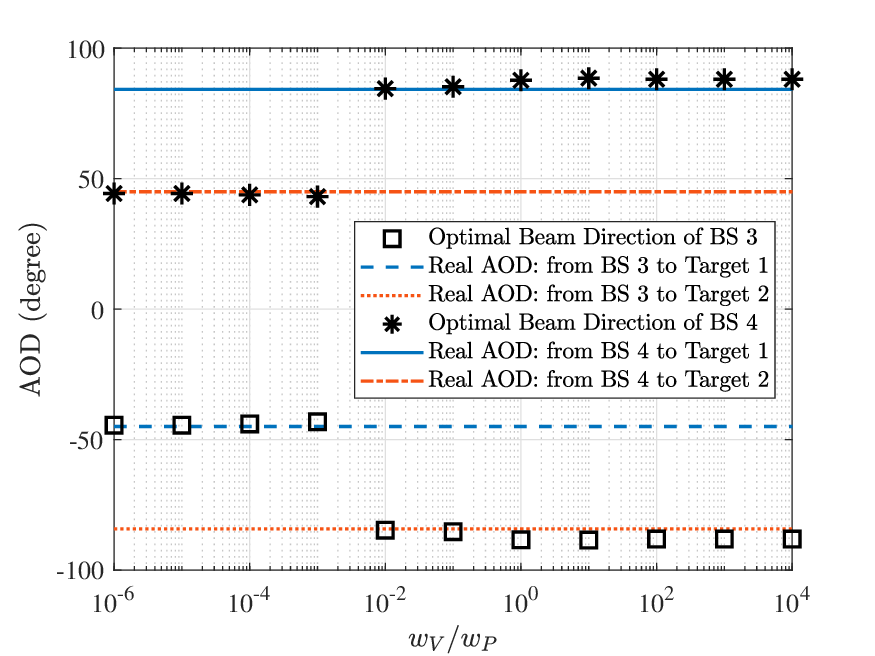}}
\caption{Optimal beam direction versus different weighting vectors for Problem (P3) at the first sensing block.}
\label{Fig_w_velocity_position}
\end{figure}

Fig. \ref{Fig_w_velocity_position}
shows the optimal beam direction of BSs 3 and 4 at the first sensing block under
different weighting vectors for Problem (P3),
where the weighting vector is devised by
$\mathbf{w} = \left[w_P \mathbf{1}^{T}_{2Q \times 1}, w_V \mathbf{1}^{T}_{2Q \times 1}, \mathbf{0}^{T}_{2KQ \times 1}\right]^T$.
As ${w_V}/{w_P} \to 0$,
the optimization result prioritizes tracking the target's position,
whereas ${w_V}/{w_P} \to \infty$,
it focuses on tracking the target's velocity.
It is worth noting that
the beam directions of BSs 1 and 2
consistently
point toward Target 1 and Target 2, respectively,
for different values of ${w_V}/{w_P}$.
For the beams of BSs 3 and 4,
the simulation results in
Fig. \ref{Fig_w_velocity_position}
indicate that
when ${w_V}/{w_P} \to 0$,
the optimal beam direction aligns with the result in Fig. \ref{Fig_simulation_optimization}.
In contrast, as ${w_V}/{w_P} \to \infty$,
the beam direction of BS 3 gradually adjusts to illuminate Target 2,
while BS 4 reorients toward Target 1.
These entirely different results
revealed an essentially distinction of the velocity measurement from the localization:
increasing the difference of observation angles
contributes to enhance the accuracy of the velocity.
Take Target 1 as an example.
Compared to BSs 1 and 3,
BSs 1 and 4 provides more distinct observation angles,
thus increasing the velocity accuracy.
This suggests a trade-off in beam allocation between velocity and position tracking.
Therefore,
the weighting vector $\mathbf{w}$ needs to be carefully designed based on practical requirements.

\begin{figure}[!t]
\centering
\subfigure[]
{\includegraphics[width=3in]{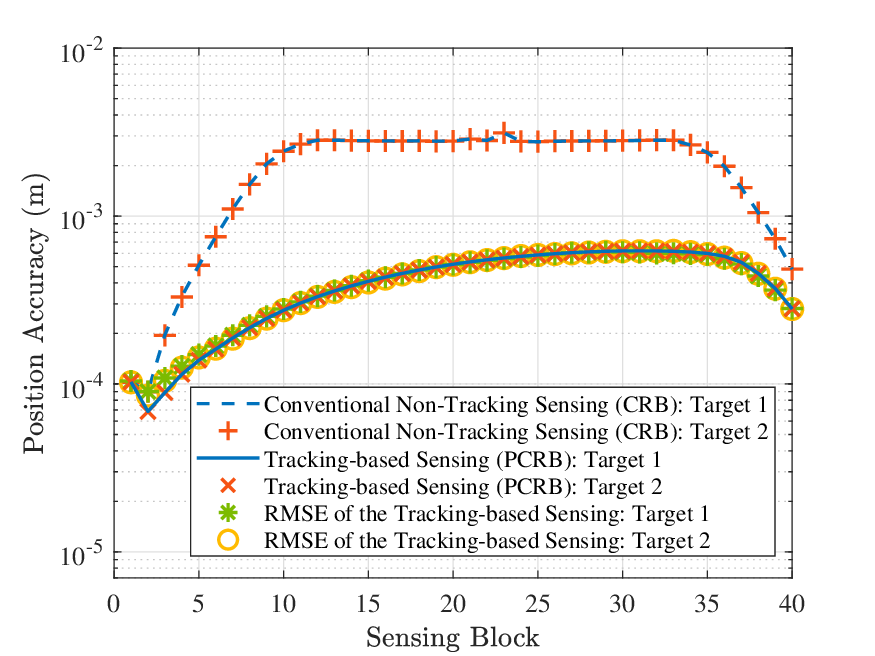}}
\subfigure[]
{\includegraphics[width=3in]{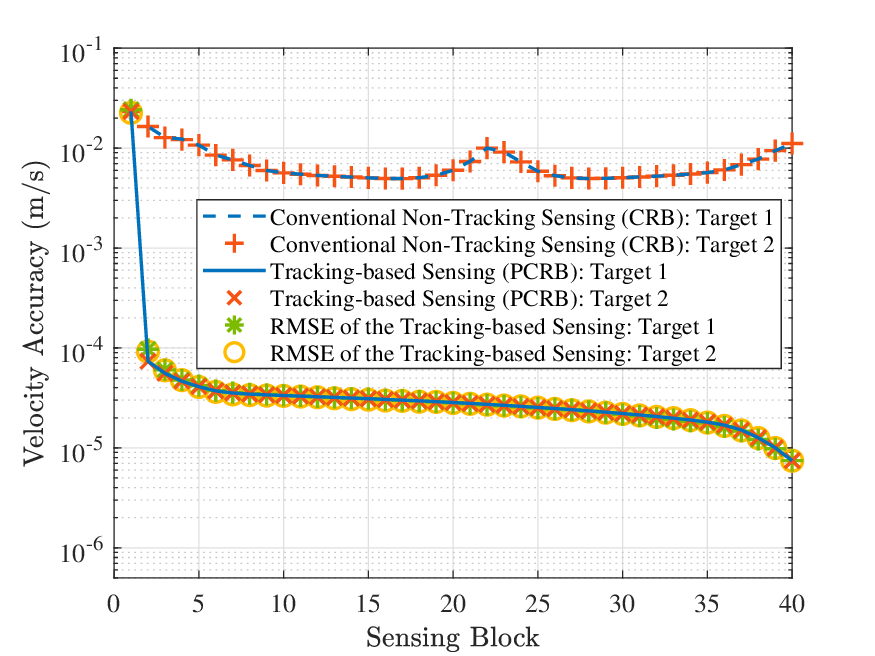}}
\caption{Comparisons of the tracking accuracy under different blocks,
with
(a) position accuracy of the tracking-based sensing versus that of the conventional non-tracking sensing,
and
(b) velocity accuracy of the tracking-based sensing versus that of the conventional non-tracking sensing.}
\label{Fig_target_comparison}
\end{figure}

\begin{figure}[!t]
\centering
\subfigure[]
{\includegraphics[width=3in]{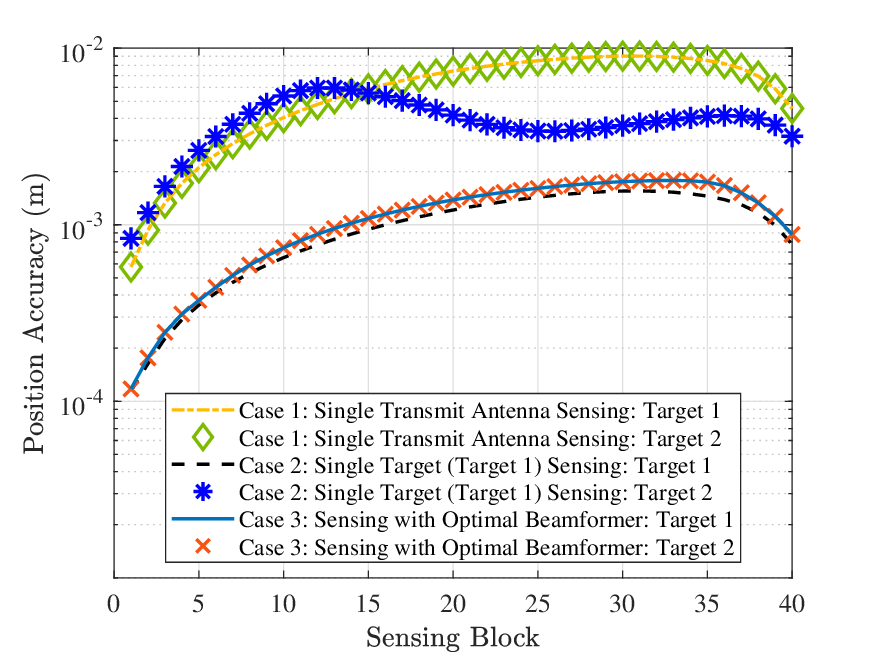}}
\subfigure[]
{\includegraphics[width=3in]{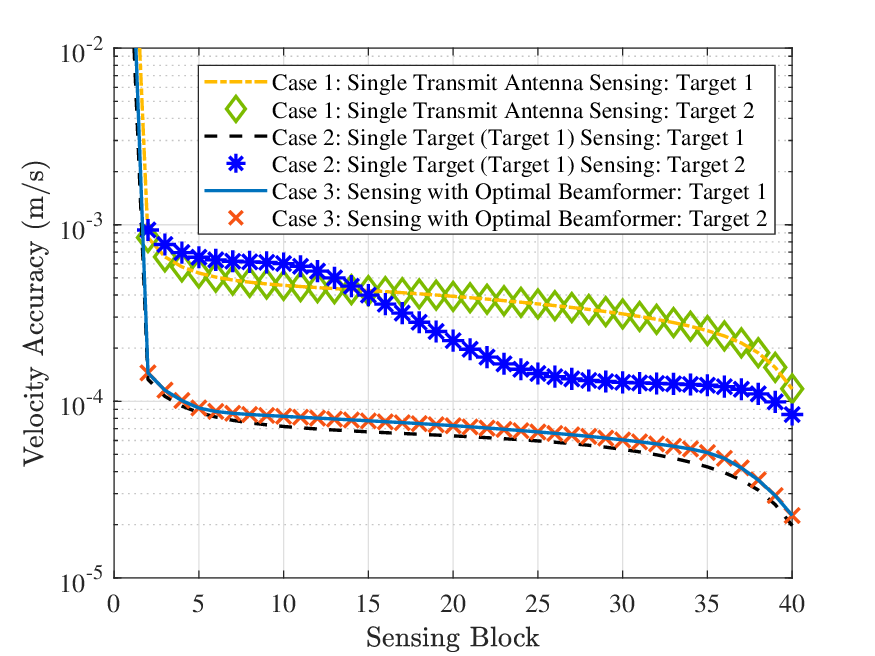}}
\caption{Comparisons of three sensing cases under different blocks,
with
(a) position accuracy,
and
(b) velocity accuracy.}
\label{Fig_target_necessity}
\end{figure}

Fig. \ref{Fig_target_comparison}
shows a comparison between
the conventional non-tracking sensing and tracking-based sensing
in terms of
the position and velocity accuracy.
Note that the weighting vector is chosen as
$\mathbf{w} = \left[\mathbf{1}^{T}_{2Q \times 1}, \mathbf{0}^{T}_{\left(2Q + 2KQ\right) \times 1}\right]^T$,
i.e., focusing on the tracking of targets' position.
The position and velocity estimation accuracy of the tracking-based sensing is defined by the PCRB,
given by
\begin{align}
 \nonumber & {{\delta }_{x, n, q}}=\sqrt{\left[ {{\mathbf{J}}_{B}}\left( {{\bm{\kappa }}_{n}} \right) \right]_{{{x}_{x,n,q}}}^{2}+\left[ {{\mathbf{J}}_{B}}\left( {{\bm{\kappa }}_{n}} \right) \right]_{{{x}_{y,n,q}}}^{2}} \\
 & {{\delta }_{v, n, q}}=\sqrt{\left[ {{\mathbf{J}}_{B}}\left( {{\bm{\kappa }}_{n}} \right) \right]_{{{v}_{x,n,q}}}^{2}+\left[ {{\mathbf{J}}_{B}}\left( {{\bm{\kappa }}_{n}} \right) \right]_{{{v}_{y,n,q}}}^{2}},
\end{align}
where ${{\delta }_{{x, n, q}}}$
and
${{\delta }_{v, n, q}}$
denote the position and velocity accuracy, respectively.
The RMSE of Algorithm \ref{Algorithm1} is shown to closely align with the optimal PCRB in both figures, demonstrating that the proposed algorithm is effective and the optimal PCRB is achievable.
Moreover,
the tracking-based position errors, as depicted in Fig. \ref{Fig_target_comparison} (a),
initially increase and then decrease
as the number of the sensing blocks increases.
This behavior is attributed to the increased
propagation attenuation of sensing signals
when the targets are closer to the center of Fig. \ref{Fig_target_motion_real}.
In contrast, the tracking-based velocity errors,
shown in Fig. \ref{Fig_target_comparison} (b),
gradually decrease as more blocks are included in the tracking process.
This phenomenon arises from the dependence of velocity estimation
on the target motion model, where a constant-velocity model is assumed,
thereby a larger blocks leading to a more accurate velocity estimation.
Furthermore,
it can also be concluded that
both targets exhibit the same sensing accuracy
as their motion is symmetric with respect to $x = 0$.
Additionally,
for both sensing results in Figs. \ref{Fig_target_comparison} (a) and (b),
the tracking-based approach outperforms
the conventional non-tracking sensing.
This demonstrates that the tracking process
can effectively reduce the sensing errors leveraging the observations over multiple blocks with the prior information of the target motion model.

Fig. \ref{Fig_target_necessity}
demonstrates the superiority of the beamforming optimization.
Three sensing cases are provided for comparisons:
\begin{itemize}
\item[(1)]
Case 1:
Networked sensing with a single transmit antenna (single transmit antenna sensing in Fig. \ref{Fig_target_necessity});
\item[(2)]
Case 2:
Networked sensing with only focusing Target 1 (single target sensing in Fig. \ref{Fig_target_necessity}).
\item[(3)]
Case 3:
Networked sensing with the PCRB-based optimal transmit beamformer (sensing with optimal beamformer in Fig. \ref{Fig_target_necessity});
\end{itemize}
For a fair comparison,
we ensure that
the beamformer in
these sensing cases
transmits with the same power,
i.e.,
$\text{tr}\left\{ \mathbf{R}_{B,n,{{k},m}} \right\}={{P}_{{{k,m}}}}$.
In networked sensing Case 1,
omnidirectional sensing signals are transmitted with equal power in all directions.
Due to the power constraint,
the signals transmitted to non-target regions
result in power loss,
thereby reducing the sensing accuracy for both targets.
In networked sensing Case 2,
all the BSs adjust their beamformers to focus on the direction of Target 1.
This beamforming strategy is optimal for a single-target sensing scenario,
where the PCRB for Target 1 achieves the lowest value among all sensing cases.
However,
the sensing accuracy for Target 2 is significantly reduced due to the beamformer mismatch,
so the single-target sensing strategy is not optimal for multi-target scenarios.
In networked sensing Case 3,
although the accuracy for both targets does not reach the accuracy of Target 1 in Case 2,
Case 3 ensures that all targets maintain a low tracking PCRB,
making it a promising approach for multi-target tracking.
Furthermore,
based on the simulation results in Fig. \ref{Fig_target_necessity},
the accuracy of the two targets in Case 3 suffers only a 1.12 times loss compared to the optimal accuracy in single-target sensing,
which is considered an acceptable trade-off.
To sum up,
the results above
demonstrates the superiority of
the proposed algorithm in multi-target tracking scenarios.

\begin{figure}[!t]
\centering
\subfigure[]
{\includegraphics[width=3in]{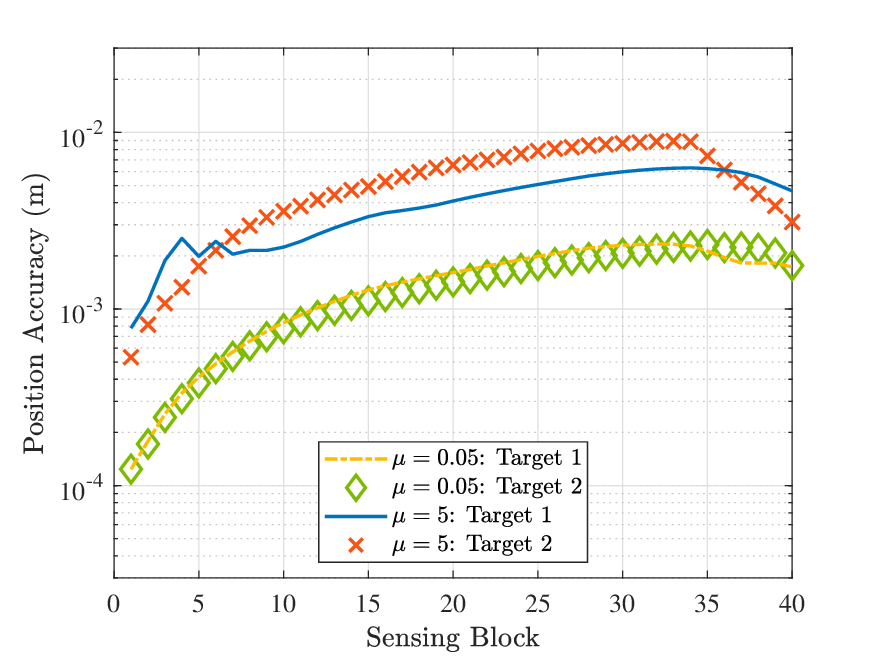}}
\subfigure[]
{\includegraphics[width=3in]{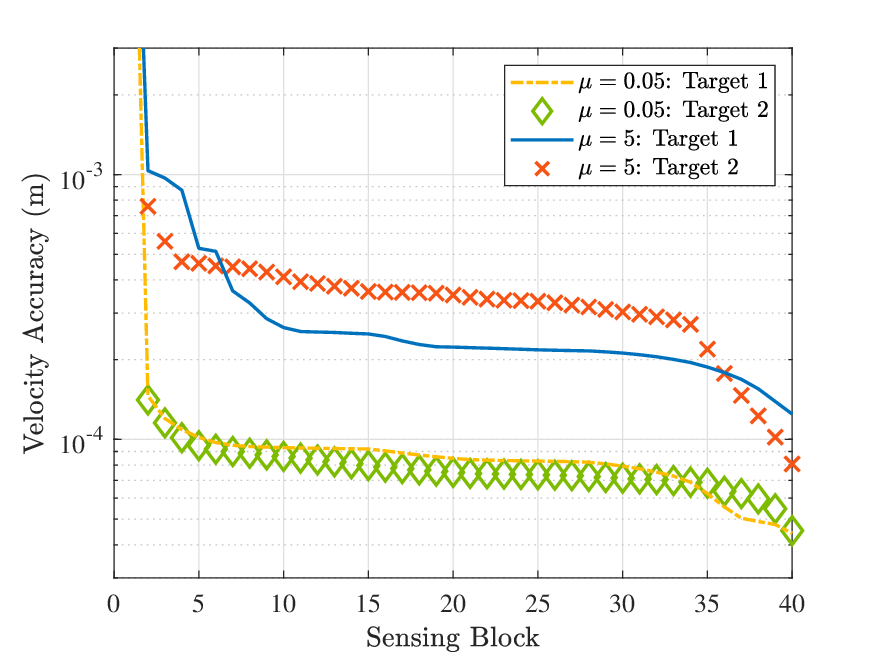}}
\caption{Tracking accuracy under different process noise power $\mu$,
with
(a) position accuracy,
and
(b) velocity accuracy.}
\label{Fig_target_RA_change}
\end{figure}

Fig. \ref{Fig_target_RA_change} shows the impact of process noise power,
i.e., $\mu$ as defined in (\ref{state_model_part_1}),
on tracking performance.
In both figures,
as the process noise $\mu$ increases,
the position and velocity errors of each target also increase.
This phenomenon occurs because
a larger $\mu$ introduces greater uncertainty in target positions.
Given that a constant velocity model is assumed,
the predicted target parameters will deviate from the actual target positions when $\mu \ne 0$.
Consequently, the reliability of the predicted information gradually diminishes over the course of tracking, leading to degraded tracking performance.
In practical applications, an accurate target motion model is able to provide more reliable information throughout the tracking process.
Therefore, the choice of the target motion model should be well-suited to the specific tracking scenario.

\section{Conclusions}\label{Section_conclusion}

This paper investigates a
BS-based networked tracking scheme
that exploits the echoes of downlink communication
signals to track multiple moving targets.
To achieve high-precision networked sensing,
we develop a networked framework
that combines a networked tracking scheme with beamforming optimization.
First,
for the tracking scheme with given beamformers,
we formulate a networked tracking model that captures both temporal and geometrical correlations
among the estimated parameters.
To fuse the data across these two dimensions,
an NKF algorithm is then employed
to recursively estimate each target's position, velocity,
and equivalent RCS at each sensing block.
Then,
we address the challenge of optimizing beamforming to enhance tracking accuracy,
despite the unavailability of future target states.
To overcome this,
we incorporate predicted target parameters into the beamforming design.
Simulation results validate the effectiveness of our approach,
demonstrating that BS beam allocation dynamically adapts to target motion,
and that the proposed tracking-based sensing method
achieves an effective target association at each BS.
Moreover,
simulation results verify that the proposed scheme
significantly outperforms conventional non-tracking-based sensing schemes.

Furthermore,
since the subcarrier allocation scheme among BSs
significantly impacts the parameter estimation accuracy
across all BSs,
optimizing this allocation
presents a promising direction for future research to enhance both estimation
and tracking performance.

\begin{appendices}

\section{Proof of Proposition \ref{proposition_1_context}}\label{proposition1_proof}

Utilizing the full FIM for subsequent derivation introduces significant computational complexity.
The following lemma is established to simplify the computation.
\begin{lemma}\label{Lemma_CRB_diagonal_matrix}
The CRB matrix $\mathbf{R}_{u,n }$
is a block-diagonal matrix containing
$K$ independent submatrices
\begin{align}
{\mathbf{R}_{u,n}} = \text{blkdiag}\left\{{\mathbf{R}_{u,n, 0, 0}}, \cdots, {\mathbf{R}_{u,n, K-1, M-1}}\right\},
\end{align}
where $\mathbf{R}_{u, n, k, m} \in \mathbb{R}^{5Q \times 5Q}$
represents the CRB matrix of the received signals of the $k$-th BS
at the $m$-th subcarrier.
\end{lemma}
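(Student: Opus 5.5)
The argument rests on two ingredients: the log-likelihood of $\mathbf{y}_n$ factors over BSs, and within each BS it factors over sub-carriers. The independence across BSs and sub-carriers comes from the orthogonal sub-carrier allocation ($\mathcal{M}_k\cap\mathcal{M}_{\bar k}=\varnothing$) together with the independent noises $\mathbf{z}_{n,k,i,m}$. I will first fix the conditioning. Following (\ref{CRB_for_signal_model}), the transmitted symbols $\mathbf{c}_{n,k,i,m}$ are treated as known, since they are downlink modulated symbols. The received vectors $\mathbf{y}_{n,k,i,m}$ are then conditionally Gaussian with mean $\bm{\mu}_{n,k,i,m}(\tilde{\mathbf{u}}_{n,k})$ and covariance $\sigma_z^2\mathbf{I}_{N_R}$. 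Because of the orthogonal allocation, BS $k$ processes only $m\in\mathcal{M}_k$, and the noise is white across $k,i,m$. This gives
\begin{align}
\nonumber \ln p\left(\mathbf{y}_n|\bm{\psi}_n\right)=\sum_{k=0}^{K-1}\sum_{m\in\mathcal{M}_k}\sum_{i=0}^{I-1}\ln p\left(\mathbf{y}_{n,k,i,m}|\tilde{\mathbf{u}}_{n,k}\right).
\end{align}

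The second step uses the fact that each summand depends only on the parameter sub-vector $\tilde{\mathbf{u}}_{n,k}$ of BS $k$. The mean in (\ref{formula_receiving_MIMO_all}) involves only $\theta_{n,k,q}$, $\varphi_{n,k,q}$ (a deterministic function of $\theta_{n,k,q}$), $\tau_{n,k,q}$, $f_{n,k,q}$ and $\xi_{n,k,q}$. Hence, for an index pair $(i,j)$ of $\mathbf{u}_n$ belonging to different BSs $k\neq\bar k$, every summand has a vanishing mixed second derivative. By (\ref{CRB_for_signal_model}) the corresponding FIM entry is then zero. The FIM is therefore $\text{blkdiag}\{\mathbf{R}^{-1}_{u,n,0},\ldots,\mathbf{R}^{-1}_{u,n,K-1}\}$, and since the inverse of a block-diagonal matrix is block-diagonal, the same holds for $\mathbf{R}_{u,n}$, with $5Q\times 5Q$ blocks. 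Within a block, the Slepian--Bagchi formula expresses the FIM as a sum of per-sub-carrier terms,
\begin{align}
\nonumber \left[\mathbf{R}^{-1}_{u,n,k}\right]_{i,j}=\frac{2}{\sigma_z^2}\sum_{m\in\mathcal{M}_k}\sum_{i'}\operatorname{Re}\left\{\frac{\partial\bm{\mu}^H_{n,k,i',m}}{\partial[\tilde{\mathbf{u}}_{n,k}]_i}\frac{\partial\bm{\mu}_{n,k,i',m}}{\partial[\tilde{\mathbf{u}}_{n,k}]_j}\right\}.
\end{align}
This per-$(k,m)$ decomposition is what the indexing $\mathbf{R}_{u,n,k,m}$ in the lemma refers to, and it is the form used later for linearity in $\mathbf{R}_{B,n,k,m}$.

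The main obstacle is interpretive rather than technical. The lemma indexes blocks by $(k,m)$, but $\tilde{\mathbf{u}}_{n,k}$ is shared by all sub-carriers of BS $k$, so true block-diagonality holds only across $k$. Across $m$, the information is additive rather than block-separated. I would therefore state explicitly that the $(k,m)$ blocks are understood as the per-sub-carrier contributions whose sum forms the BS-$k$ FIM block. A second delicate point is the expectation over the random symbols $\mathbf{c}_{n,k,i,m}$. If the symbols are instead treated as random, I would average the Fisher terms over them, which replaces $\mathbf{B}\mathbf{c}\mathbf{c}^H\mathbf{B}^H$ by $\mathbf{R}_{B,n,k,m}$ and preserves the block structure, since no cross-BS coupling is introduced.
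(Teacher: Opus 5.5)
Your proposal is correct and follows the same route as the paper. The paper's proof is only the remark that the received signals at different BSs are independent. Your factorization of $\ln p(\mathbf{y}_n|\bm{\psi}_n)$ over BSs makes that remark rigorous, together with your observation that each BS's factor depends only on its own sub-vector $\tilde{\mathbf{u}}_{n,k}$; this, rather than independence alone, is what makes the cross-BS Fisher entries vanish.

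Your caveat about the indexing is also right. Genuine block-diagonality holds only across $k$, giving $K$ blocks of size $5Q\times 5Q$ as required by $\mathbf{R}_{u,n}\in\mathbb{R}^{5KQ\times 5KQ}$. The $(k,m)$ terms can therefore only be read as additive per-sub-carrier Fisher contributions. Indeed, a single sub-carrier cannot separate $\tau_{n,k,q}$ from the phase of $\xi_{n,k,q}$, so a per-sub-carrier CRB generally does not exist.
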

\begin{proof}
The Lemma \ref{Lemma_CRB_diagonal_matrix} can be readily proved as
the received signals from different BSs are independent.
\end{proof}

Based on the definition of the CRB in (\ref{CRB_for_signal_model}) and Lemma \ref{Lemma_CRB_diagonal_matrix},
the FIM of the received signals at $k$-th BS can be partitioned into the following 25 submatrices
\begin{align}\label{doeleeknelknlkne}
{\mathbf{R}_{u,n, k, m}^{-1}}=\left[ \begin{matrix}
   {{\mathbf{J}}_{\theta \theta }} & {{\mathbf{J}}_{\theta \tau }} & {{\mathbf{J}}_{\theta f}} & {{\mathbf{J}}_{\theta {{\xi }_{R}}}} & {{\mathbf{J}}_{\theta {{\xi }_{I}}}}  \\
   \mathbf{J}_{\theta \tau }^{T} & {{\mathbf{J}}_{\tau \tau }} & {{\mathbf{J}}_{\tau f}} & {{\mathbf{J}}_{\tau {{\xi }_{R}}}} & {{\mathbf{J}}_{\tau {{\xi }_{I}}}}  \\
   \mathbf{J}_{\theta f}^{T} & \mathbf{J}_{\tau f}^{T} & {{\mathbf{J}}_{ff}} & {{\mathbf{J}}_{f{{\xi }_{R}}}} & {{\mathbf{J}}_{f{{\xi }_{I}}}}  \\
   \mathbf{J}_{\theta {{\xi }_{R}}}^{T} & \mathbf{J}_{\tau {{\xi }_{R}}}^{T} & \mathbf{J}_{f{{\xi }_{R}}}^{T} & {{\mathbf{J}}_{{{\xi }_{R}}{{\xi }_{R}}}} & {{\mathbf{J}}_{{{\xi }_{R}}{{\xi }_{I}}}}  \\
   \mathbf{J}_{\theta {{\xi }_{I}}}^{T} & \mathbf{J}_{\tau {{\xi }_{I}}}^{T} & \mathbf{J}_{f{{\xi }_{I}}}^{T} & \mathbf{J}_{{{\xi }_{R}}{{\xi }_{I}}}^{T} & {{\mathbf{J}}_{{{\xi }_{I}}{{\xi }_{I}}}}  \\
\end{matrix} \right].
\end{align}
For ease of derivation, we now define
\begin{align}
\nonumber {{{\mathbf{\tilde{Y}}}}_{n,{{k}},i,m,q}}= & \ {{{{c}}}_{n,{{k}},i,m}} {{e}^{j2\pi {{f}_{n,{{k}},q}}i{{T}_{0}}}}{{e}^{j2\pi m\Delta f{{\tau }_{n,{{k}},q}}}}\\
&\times {{{\mathbf{a}}_{r}}\left( {{\theta }_{n,{{k}},q}} \right)}\mathbf{a}_{t}^{H}\left( {{\varphi }_{n,{{k}},q}} \right),
\end{align}
so the original sensing model (\ref{formula_receiving_MIMO_all}) is equivalent to
\begin{align}\label{dlieleleel}
{{\mathbf{y}}_{n,{{k}},i,m}}=\sum\limits_{q=0}^{Q-1}{{{\xi }_{n,{{k}},q}}{{{\mathbf{\tilde{Y}}}}_{n,{{k}},i,m,q}}{{{\mathbf{{B}}}}_{n,k,m}}{{\mathbf{c}}_{n,k,i,m}}}.
\end{align}
Providing detailed derivations for all matrices would result in an overly lengthy exposition.
To maintain clarity and focus,
we instead illustrate the process using two representative submatrices
${{\mathbf{J}}_{{{\xi }_{R}}{{\xi }_{R}}}}$
and
${{\mathbf{J}}_{\theta \tau }}$
as illustrative examples.

The matrix ${{\mathbf{J}}_{{{\xi }_{R}}{{\xi }_{R}}}}$
has size $Q \times Q$,
and each element can be computed by
\begin{align}\label{dfldiejinfn}
\nonumber  &{{\left[ {{\mathbf{J}}_{{{\xi }_{R}}{{\xi }_{R}}}} \right]}_{{{q}_{0}},{{q}_{1}}}}=-\mathbb{E}\left\{ \frac{{{\partial }^{2}}\ln p\left( \left. {{\mathbf{Y}}_{n}} \right|{{\mathbf{\psi }}_{n}} \right)}{\partial \operatorname{Re}\left\{ {{\xi }_{n,k,{{q}_{0}}}} \right\}\partial \operatorname{Re}\left\{ {{\xi }_{n,k,{{q}_{1}}}} \right\}} \right\}\\
&=\frac{2}{\sigma _{z}^{2}}\mathbb{E}\left\{ \sum\limits_{i,m}{\operatorname{Re}\left\{ {{\left( \frac{\partial {{\mathbf{y}}_{n,k,i,m}}}{\partial \operatorname{Re}\left\{ {{\xi }_{n,k,{{q}_{0}}}} \right\}} \right)}^{H}}\frac{\partial {{\mathbf{y}}_{n,k,i,m}}}{\partial \operatorname{Re}\left\{ {{\xi }_{n,k,{{q}_{0}}}} \right\}} \right\}} \right\}.
\end{align}

Substituting (\ref{dlieleleel}) into (\ref{dfldiejinfn}),
we obtain the expression for the matrix ${{\mathbf{J}}_{{{\xi }_{R}}{{\xi }_{R}}}}$
as follows
\begin{align}\label{dslijdlfjdljf22}
\nonumber   & {{\left[ {{\mathbf{J}}_{{{\xi }_{R}}{{\xi }_{R}}}} \right]}_{{{q}_{0}},{{q}_{1}}}}= \frac{2}{\sigma _{z}^{2}}IM{{N}_{R}}\\
& \times \operatorname{Re}\left\{ \chi _{n,k,{{q}_{0}},{{q}_{1}}}^{\left( \xi \xi  \right)}\mathbf{a}_{t}^{T}\left( {{\varphi }_{n,k,{{q}_{0}}}} \right){\mathbf{R}_{B,n,k,m}^{*}}\mathbf{a}_{t}^{*}\left( {{\varphi }_{n,k,{{q}_{1}}}} \right) \right\},
\end{align}
where
$\chi _{n,k,{{q}_{0}},{{q}_{1}}}^{\left( \xi \xi  \right)}$
describes the correlation of the sensing signals from two targets
in the spatial-time-frequency domain,
defined as follows
\begin{align}
\nonumber & \chi _{n,k,{{q}_{0}},{{q}_{1}}}^{\left( \xi \xi  \right)}= \frac{1}{IM{{N}_{R}}}\mathbf{a}_{r}^{H}\left( {{\theta }_{n,k,{{q}_{0}}}} \right){{\mathbf{a}}_{r}}\left( {{\theta }_{n,k,{{q}_{1}}}} \right)\\
 &\times \sum\limits_{i,m}{ \gamma_{m, k} {{e}^{j2\pi \left( {{f}_{n,k,{{q}_{1}}}}-{{f}_{n,k,{{q}_{0}}}} \right)i{{T}_{0}}}}{{e}^{j2\pi m\Delta f\left( {{\tau }_{n,k,{{q}_{1}}}}-{{\tau }_{n,k,{{q}_{0}}}} \right)}}},
\end{align}
where $\gamma_{m,k}$ denotes
the indicator of the subcarrier allocation,
and $\gamma_{m,k} = 1$
if $m \in \mathcal{M}_k$,
otherwise $\gamma_{m,k} = 0$.
From (\ref{dslijdlfjdljf22}),
we can observe that the matrix
${{\mathbf{J}}_{{{\xi }_{R}}{{\xi }_{R}}}}$
is a linear function of
${\mathbf{R}_{B,n,k,m}}$.

Before proceeding with the derivation of ${{\mathbf{J}}_{\theta \tau }} \in \mathbb{R}^{Q \times Q}$,
we first provide
the partial derivatives of ${{{\mathbf{\tilde{Y}}}}_{n,k,i,m}}$
with respect to $\theta_{n, k, q}$ and $\tau_{n, k, q}$ as follows
\begin{align}\label{eeowjnldfuujk}
\nonumber  \frac{\partial {{{\mathbf{\tilde{Y}}}}_{n,k,i,m}}}{\partial {{\theta }_{n,k,q}}}& ={{c}_{n,k,i,m}}{{e}^{j2\pi {{f}_{n,k,q}}i{{T}_{0}}}}{{e}^{j2\pi m\Delta f{{\tau }_{n,k,q}}}}\\
\nonumber  &\times \left[ {{{\mathbf{\dot{a}}}}_{r}}\left( {{\theta }_{n,k,q}} \right)\mathbf{a}_{t}^{H}\left( {{\varphi }_{n,k,q}} \right)+{{\mathbf{a}}_{r}}\left( {{\theta }_{n,k,q}} \right)\mathbf{\dot{a}}_{t}^{H}\left( {{\varphi }_{n,k,q}} \right) \right] \\
 \frac{\partial {{{\mathbf{\tilde{Y}}}}_{n,k,i,m}}}{\partial {{\tau }_{n,k,q}}}& =j2\pi m\Delta f{{{\mathbf{\tilde{Y}}}}_{n,k,i,m}},
\end{align}
where
${{{\mathbf{\dot{a}}}}_{r}}\left( {{\theta }_{n,k,q}} \right)=j\frac{2\pi }{\lambda }d\cos {{\theta }_{n,k,q}}\text{diag}\left\{ {{\bm{\rho }}_{{{N}_{R}}}} \right\}{{\mathbf{a}}_{r}}\left( {{\theta }_{n,k,q}} \right)$
and
${{{\mathbf{\dot{a}}}}_{t}}\left( {{\varphi }_{n,k,q}} \right)=j\frac{2\pi }{\lambda }d\cos {{\theta }_{n,k,q}}\text{diag}\left\{ {{\bm{\rho }}_{{{N}_{T}}}} \right\}{{\mathbf{a}}_{t}}\left( {{\varphi }_{n,k,q}} \right)$
represent the
partial derivatives of the steering vectors with respect to the AOA,
and
$\bm{\rho}_{N_T} = \left[0, 1, \cdots, N_T - 1\right]^{T} \in \mathbb{Z}^{N_T \times 1}$
and
$\bm{\rho}_{N_R} = \left[0, 1, \cdots, N_R - 1\right]^{T} \in \mathbb{Z}^{N_R \times 1}$.
Using (\ref{eeowjnldfuujk}),
the expression for the matrix ${{\mathbf{J}}_{\theta \tau }}$
is given in (\ref{dslddijdl43534346fjdljf22}) at the top of the next page.
\begin{figure*}[ht]
\begin{align}\label{dslddijdl43534346fjdljf22}
\nonumber &{{\left[ {{\mathbf{J}}_{\theta \tau }} \right]}_{{{q}_{0}},{{q}_{1}}}}= -\frac{2}{\sigma _{z}^{2}}IM{{N}_{R}}\left( 2\pi \Delta f \right)\\
& \times \operatorname{Im}\left\{ \xi _{n,k,{{q}_{0}}}^{*}{{\xi }_{n,k,{{q}_{1}}}}\left[ \dot{\chi }_{n,k,{{q}_{0}},{{q}_{1}}}^{\left( \theta \tau  \right)}\mathbf{a}_{t}^{T}\left( {{\varphi }_{n,k,{{q}_{0}}}} \right)\mathbf{R}_{B,n,k,m}^{*}{{\mathbf{a}}^{*}_{t}}\left( {{\varphi }_{n,k,{{q}_{1}}}} \right)+\chi _{n,k,{{q}_{0}},{{q}_{1}}}^{\left( \theta \tau  \right)}{\mathbf{\dot{a}}_{t}^{T}\left( {{\varphi }_{n,k,{{q}_{0}}}} \right)\mathbf{R}_{B,n,k,m}^{*}{{\mathbf{a}}^{*}_{t}}\left( {{\varphi }_{n,k,{{q}_{1}}}} \right)} \right] \right\}.
\end{align}
\normalsize
\hrulefill
\vspace*{4pt}
\end{figure*}
In (\ref{dslddijdl43534346fjdljf22}),
${\dot{\chi }_{n,k,{{q}_{0}},{{q}_{1}}}^{\left( \theta \tau  \right)}}$
and
${\chi _{n,k,{{q}_{0}},{{q}_{1}}}^{\left( \theta \tau  \right)}}$
represent the spatial-frequency-weighted correlation functions,
given by
\begin{align}
\nonumber \dot{\chi }_{n,k,{{q}_{0}},{{q}_{1}}}^{\left( \theta \tau  \right)}= & \ \frac{1}{IM{{N}_{R}}}\mathbf{\dot{a}}_{r}^{H}\left( {{\theta }_{n,k,{{q}_{0}}}} \right){{\mathbf{a}}_{r}}\left( {{\theta }_{n,k,{{q}_{1}}}} \right) \\
\nonumber  & \times \sum\limits_{i,m}{m{\gamma_{m,k}}{{e}^{j2\pi \left( {{f}_{n,k,{{q}_{1}}}}-{{f}_{n,k,{{q}_{0}}}} \right)i{{T}_{0}}}}} \\
\nonumber &\times {{e}^{j2\pi m\Delta f\left( {{\tau }_{n,k,{{q}_{1}}}}-{{\tau }_{n,k,{{q}_{0}}}} \right)}}\\
\nonumber  \chi _{n,k,{{q}_{0}},{{q}_{1}}}^{\left( \theta \tau  \right)}=& \ \frac{1}{IM{{N}_{R}}}\mathbf{a}_{r}^{H}\left( {{\theta }_{n,k,{{q}_{0}}}} \right){{\mathbf{a}}_{r}}\left( {{\theta }_{n,k,{{q}_{1}}}} \right) \\
\nonumber  & \times \sum\limits_{i,m}{m{\gamma_{m,k}}{{e}^{j2\pi \left( {{f}_{n,k,{{q}_{1}}}}-{{f}_{n,k,{{q}_{0}}}} \right)i{{T}_{0}}}}}\\
 &\times {{e}^{j2\pi m\Delta f\left( {{\tau }_{n,k,{{q}_{1}}}}-{{\tau }_{n,k,{{q}_{0}}}} \right)}}.
\end{align}
It is worth noting that
${{\mathbf{J}}_{\theta \tau }}$
is also linear with respect to
${\mathbf{R}_{B,n,k,m}}$.

Other submatrices in (\ref{doeleeknelknlkne}) can also be derived using the similar
method applied for ${{\mathbf{J}}_{{{\xi }_{R}}{{\xi }_{R}}}}$
and
${{\mathbf{J}}_{\theta \tau }}$.
After obtaining all the closed-form results for these submatrices,
it can be readily concluded that the covariance matrix $\mathbf{R}_{u,n}^{-1}$ is the linear function of ${\mathbf{R}_{B,n,k,m}}$.
Then,
we examine the BIM.
According to (\ref{dfodojfdojfodjf333dfdf}),
the matrix ${{\mathbf{J}}_{P}}\left( {{\bm{\kappa }}_{n}} \right)$
is related to ${{\mathbf{R}}_{\kappa }}$, ${{\mathbf{F}}_{\kappa }}$, and $\mathbf{J}_{B}^{-1}\left( {{\bm{\kappa }}_{n-1}} \right)$.
These matrices are all irrelevant to ${\mathbf{R}_{B,n,k,m}}$ at the $n$-th block.
The data information matrix ${{\mathbf{J}}_{D}}\left( {{\bm{\kappa }}_{n}} \right)$
depends on $\mathbf{H}_n$ and $\mathbf{R}_{u,n}^{-1}$, where, as demonstrated earlier, only $\mathbf{R}_{u,n}^{-1}$ is linear with respect to ${\mathbf{R}_{B,n,k,m}}$.
Finally,
it can be concluded that
the BIM
is a linear function of
${\mathbf{R}_{B,n,k,m}}$.
Proposition \ref{proposition_1_context} is thus proved.

\section{Proof of Proposition \ref{proposition_optimization_span}}\label{Appendix_proposition_optimization_span}

Since
$\mathbf{{R}}^{\left({\left. n + 1 \right|n}\right)}_{\text{op},B, k, m}$
is a positive definite matrix,
it can be decomposed as
\begin{align}\label{sdldfdfdfkjfldkjfapp}
\mathbf{{R}}^{\left({\left. n + 1 \right|n}\right)}_{\text{op},B, k, m}={{\bm{\Delta }}^{\left({\left. n + 1 \right|n}\right)}_{k, m}}\left(\bm{\Delta }^{\left({\left. n + 1 \right|n}\right)}_{k, m}\right)^{H},
\ \forall n,k,m.
\end{align}
Define ${{\mathbf{U}}^{\left({\left. n + 1 \right|n}\right)}_{k}}=\left[ {{\mathbf{A}}^{\left({\left. n + 1 \right|n}\right)}_{t, k}},{{{\mathbf{\dot{A}}}}^{\left({\left. n + 1 \right|n}\right)}_{t, k}} \right]\in {{\mathbb{C}}^{{{N}_{T}}\times 2Q}}$,
where
${{\mathbf{A}}^{\left({\left. n + 1 \right|n}\right)}_{t, k}}=\left[ {{\mathbf{a}}_{t}}\left( {\hat{\varphi }^{\left({\left. n + 1 \right|n}\right)}_{k,0}} \right),\cdots ,{{\mathbf{a}}_{t}}\left( {\hat{\varphi }^{\left({\left. n + 1 \right|n}\right)}_{k,Q-1}} \right) \right]\in {{\mathbb{C}}^{{{N}_{T}}\times Q}}$
and
${{{\mathbf{\dot{A}}}}^{\left({\left. n + 1 \right|n}\right)}_{t, k}}=\left[ {{{\mathbf{\dot{a}}}}_{t}}\left( {\hat{\varphi }^{\left({\left. n + 1 \right|n}\right)}_{k,0}} \right),\cdots ,{{{\mathbf{\dot{a}}}}_{t}}\left( {\hat{\varphi }^{\left({\left. n + 1 \right|n}\right)}_{k,Q-1}} \right) \right]\in {{\mathbb{C}}^{{{N}_{T}}\times Q}}$
denote prediction of the $\left(n+1\right)$-th sensing block's array manifold
and
its derivative of BS $k$ at sensing block $n$, respectively.
Moreover,
let ${{\mathbf{P}}^{\left({\left. n + 1 \right|n}\right)}_{u,k}}$
and
$\mathbf{P}_{u,k}^{\bot{\left({\left. n + 1 \right|n}\right)} }$
represent
the orthogonal projection onto the subspace spanned by the columns of ${{\mathbf{U}}^{\left({\left. n + 1 \right|n}\right)}_{k}}$
and
the orthogonal complement of ${{\mathbf{P}}^{\left({\left. n + 1 \right|n}\right)}_{u,k}}$,
respectively.
Adopting above definitions,
the matrix ${{\bm{\Delta }}^{\left({\left. n + 1 \right|n}\right)}_{k, m}}$
can be further decomposed into two orthogonal combinations
\begin{align}\label{sdlkjfldkjfapp}
\nonumber &{{\bm{\Delta }}^{\left({\left. n + 1 \right|n}\right)}_{k, m}}={{\mathbf{P}}^{\left({\left. n + 1 \right|n}\right)}_{u,k}}{{\bm{\Delta }}^{\left({\left. n + 1 \right|n}\right)}_{k, m}}+\left( \mathbf{I}-{{\mathbf{P}}^{\left({\left. n + 1 \right|n}\right)}_{u,k}} \right){{\bm{\Delta }}^{\left({\left. n + 1 \right|n}\right)}_{k, m}}\\
& ={{\mathbf{P}}^{\left({\left. n + 1 \right|n}\right)}_{u,k}}{{\bm{\Delta }}^{\left({\left. n + 1 \right|n}\right)}_{k, m}}+\mathbf{P}_{u,k}^{\bot{\left({\left. n + 1 \right|n}\right)} }{{\bm{\Delta }}^{\left({\left. n + 1 \right|n}\right)}_{k, m}}, \ \forall n,k,m.
\end{align}
Substituting (\ref{sdlkjfldkjfapp})
into (\ref{sdldfdfdfkjfldkjfapp}),
we have
\begin{align}
\mathbf{{R}}^{\left({\left. n + 1 \right|n}\right)}_{\text{op},B, k, m}={{\mathbf{R}}^{\left({\left. n + 1 \right|n}\right)}_{1,k, m}}+{{\mathbf{R}}^{\left({\left. n + 1 \right|n}\right)}_{2,k, m}}, \ \forall n,k,m,
\end{align}
where
${{\mathbf{R}}^{\left({\left. n + 1 \right|n}\right)}_{1,k, m}}={{\mathbf{P}}^{\left({\left. n + 1 \right|n}\right)}_{u,k}}{{\bm{\Delta }}^{\left({\left. n + 1 \right|n}\right)}_{k, m}}\left({{\mathbf{P}}^{\left({\left. n + 1 \right|n}\right)}_{u,k}}{{\bm{\Delta }}^{\left({\left. n + 1 \right|n}\right)}_{k, m}}\right)^{H}$,
and
${{\mathbf{R}}^{\left({\left. n + 1 \right|n}\right)}_{2,k, m}}=
\mathbf{P}_{u,k}^{\bot{\left({\left. n + 1 \right|n}\right)} }{{\bm{\Delta }}^{\left({\left. n + 1 \right|n}\right)}_{k, m}}\left({{\mathbf{P}}^{\left({\left. n + 1 \right|n}\right)}_{u,k}}{{\bm{\Delta }}^{\left({\left. n + 1 \right|n}\right)}_{k, m}}\right)^{H}
+{{\mathbf{P}}^{\left({\left. n + 1 \right|n}\right)}_{u,k}}{{\bm{\Delta }}^{\left({\left. n + 1 \right|n}\right)}_{k, m}}\left(\mathbf{P}_{u,k}^{\bot{\left({\left. n + 1 \right|n}\right)} }{{\bm{\Delta }}^{\left({\left. n + 1 \right|n}\right)}_{k, m}}\right)^{H}
+\mathbf{P}_{u,k}^{\bot{\left({\left. n + 1 \right|n}\right)} }{{\bm{\Delta }}^{\left({\left. n + 1 \right|n}\right)}_{k, m}}\left(\mathbf{P}_{u,k}^{\bot{\left({\left. n + 1 \right|n}\right)} }{{\bm{\Delta }}^{\left({\left. n + 1 \right|n}\right)}_{k, m}}\right)^{H}$.

Since
${{\mathbf{R}}^{\left({\left. n + 1 \right|n}\right)}_{2,k, m}}$
is the matrix orthogonal to ${{\mathbf{U}}^{\left({\left. n + 1 \right|n}\right)}_{k}}$,
it can be readily proved that
the property
$\left({{\mathbf{A}}^{\left({\left. n + 1 \right|n}\right)}_{t, k}}\right)^H{{\mathbf{R}}^{\left({\left. n + 1 \right|n}\right)}_{2,k, m}}{{\mathbf{A}}^{\left({\left. n + 1 \right|n}\right)}_{t, k}}
=\left({{\mathbf{\dot{A}}}^{\left({\left. n + 1 \right|n}\right)}_{t, k}}\right)^H{{\mathbf{R}}^{\left({\left. n + 1 \right|n}\right)}_{2,k, m}}{{\mathbf{A}}^{\left({\left. n + 1 \right|n}\right)}_{t, k}}
=\left({{\mathbf{A}}^{\left({\left. n + 1 \right|n}\right)}_{t, k}}\right)^H{{\mathbf{R}}^{\left({\left. n + 1 \right|n}\right)}_{2,k, m}}{{\mathbf{\dot{A}}}^{\left({\left. n + 1 \right|n}\right)}_{t, k}}
=\left({{\mathbf{\dot{A}}}^{\left({\left. n + 1 \right|n}\right)}_{t, k}}\right)^H{{\mathbf{R}}^{\left({\left. n + 1 \right|n}\right)}_{2,k, m}}{{\mathbf{\dot{A}}}^{\left({\left. n + 1 \right|n}\right)}_{t, k}}=\mathbf{0}$
holds.
Therefore, the above results imply that
\begin{align}\label{dlfkjdlkjflkd3rwr2r2}
\nonumber &{{\left. {{\mathbf{J}}_{B}}\left( {{\bm{\hat{\kappa }}}_{\left. n + 1 \right|n}} \right) \right|}_{{{\mathbf{R}}_{B}}={{\mathbf{R}}^{\left({\left. n + 1 \right|n}\right)}_{2,k, m}}}}=\mathbf{0},\\
&\text{tr}\left\{ {{\mathbf{R}}^{\left({\left. n + 1 \right|n}\right)}_{2,k, m}} \right\}=\left\| \mathbf{P}_{u,k}^{\bot{\left({\left. n + 1 \right|n}\right)} }{{\bm{\Delta }}^{\left({\left. n + 1 \right|n}\right)}_{k}} \right\|_{F}^{2}\ge 0.
\end{align}
Due to the power constraint
$\text{tr}\left\{ \mathbf{{R}}^{\left({\left. n + 1 \right|n}\right)}_{\text{op},B, k, m} \right\}=\text{tr}\left\{ {{\mathbf{R}}^{\left({\left. n + 1 \right|n}\right)}_{1,k, m}} \right\}+\text{tr}\left\{ {{\mathbf{R}}^{\left({\left. n + 1 \right|n}\right)}_{2,k, m}} \right\}={{P}_{k,m}}$,
we can conclude that
$\text{tr}\left\{ {{\mathbf{R}}^{\left({\left. n + 1 \right|n}\right)}_{1,k,m}} \right\}\le {{P}_{k,m}}$.
Moreover, note that
for a given beamformer,
increasing the power will enhance the tracking performance.
Therefore,
for optimal beamformer,
more transmit power should be allocated to ${{\mathbf{R}}^{\left({\left. n + 1 \right|n}\right)}_{1,k,m}}$,
due to the BIM relationship derived in
(\ref{dlfkjdlkjflkd3rwr2r2}),
to minimize the PCRB.
This implies that
$\mathbf{P}_{u,k}^{\bot{\left({\left. n + 1 \right|n}\right)} }{{\bm{\Delta }}^{\left({\left. n + 1 \right|n}\right)}_{k,m}}=\mathbf{0}$
and
$\text{tr}\left\{ {{\mathbf{R}}^{\left({\left. n + 1 \right|n}\right)}_{1,k,m}} \right\}={{P}_{k,m}}$.
Hence, we finally obtain that $\mathbf{{R}}^{\left({\left. n + 1 \right|n}\right)}_{\text{op},B, k,m}={{\mathbf{R}}^{\left({\left. n + 1 \right|n}\right)}_{1,k,m}}$,
which means
$\mathbf{{R}}^{\left({\left. n + 1 \right|n}\right)}_{\text{op},B, k,m}$
is spanned by ${{\mathbf{{A}}}^{\left({\left. n + 1 \right|n}\right)}_{t, k}}$ and ${{\mathbf{\dot{A}}}^{\left({\left. n + 1 \right|n}\right)}_{t, k}}$.
Proposition \ref{proposition_optimization_span} is thus proved.

\end{appendices}

\bibliographystyle{IEEEtran}
\bibliography{AAA}
\vfill

\end{document}